\documentclass[journal]{IEEEtran}
\usepackage{enumitem}
\usepackage{cite, xcolor}
\usepackage[hidelinks]{hyperref}
\usepackage{microtype}
\usepackage{graphicx}
\usepackage{booktabs} % for professional tables
\usepackage{amsmath,amssymb}
\usepackage{enumerate}
\usepackage[linesnumbered,ruled,vlined]{algorithm2e}
\usepackage{algorithmic}
\usepackage{mathrsfs}
\usepackage{dsfont}
\usepackage{bbm}
\usepackage{float}
\usepackage{xcolor}
\usepackage{booktabs}
\usepackage{cite}
\usepackage{amsthm} % for environment proof

\usepackage{multirow}
\usepackage{pifont}% http://ctan.org/pkg/pifont

\usepackage{array, makecell} %

\usepackage{nicefrac}       % compact symbols for 1/2, etc.

\usepackage[obeyspaces]{xurl}

\usepackage[capitalise]{cleveref}
\Crefname{equation}{eq.}{eqs.}
\newtheorem{theorem}{Theorem}
\newtheorem{lemma}{Lemma}

\newtheorem{definition}{Definition}

\definecolor{dartmouthgreen}{rgb}{0.05, 0.5, 0.06}
\definecolor{bittersweet}{rgb}{1.0, 0.44, 0.37}
\definecolor{brightmaroon}{rgb}{0.76, 0.13, 0.28}
\definecolor{bluegray}{rgb}{0.4, 0.6, 0.8}
\definecolor{brandeisblue}{rgb}{0.0, 0.44, 1.0}
\definecolor{battleshipgrey}{rgb}{0.52, 0.52, 0.51}
\definecolor{cadetgrey}{rgb}{0.57, 0.64, 0.69}
\definecolor{lightgray}{rgb}{0.83, 0.83, 0.83}

\def\A{\mathbf{A}}

\def\x{\mathbf{x}}
\def\z{\mathbf{z}}

\def\h{\boldsymbol{h}}

\def\J{r}

\def\dimx{n}
\def\dimz{m}
\def\noise{\mathbf{e}}

\def\ex{\hat{\mathbf{x}}}

\def\eR{\boldsymbol{R}}

\def\thetaEnc{\theta_{\text{Enc}}}
\def\thetaDec{\theta_{\text{Dec}}}

\def\mask{\mathbf{b}}

\usepackage{amsmath,amsfonts,bm}

\def\eqref#1{equation~\ref{#1}}
\def\1{\bm{1}}

\DeclareMathAlphabet{\mathsfit}{\encodingdefault}{\sfdefault}{m}{sl}
\SetMathAlphabet{\mathsfit}{bold}{\encodingdefault}{\sfdefault}{bx}{n}

\begin{document}

\title{
%Structure-Aware Stealthy False Data Injection Attacks in AC Power Systems
% Limits of Detecting Physically Consistent False Data Injection Attacks on the Measurement Manifold
Limits of Residual-Based Detection for Physically Consistent False Data Injection
}

\author{
Chenhan Xiao,~\IEEEmembership{Student Member,~IEEE}
and~Yang~Weng,~\IEEEmembership{Senior Member,~IEEE}
\vspace{-1em}
\thanks{
C. Xiao and Y. Weng are with the School of Electrical, Computer and Energy Engineering at Arizona State University, \{cxiao20 and yang.weng\}@asu.edu.}
}

% make the title area
\maketitle

\begin{abstract}
False data injection attacks (FDIAs) pose a persistent challenge to AC power system state estimation. In current practice, detection relies primarily on topology-aware residual-based tests that assume malicious measurements can be distinguished from normal operation through physical inconsistency reflected in abnormal residual behavior.
This paper shows that this assumption does not always hold: when FDIA scenarios produce manipulated measurements that remain on the measurement manifold induced by AC power flow relations and measurement redundancy, residual-based detectors may fail to distinguish them from nominal data.
The resulting detectability limitation is a property of the measurement manifold itself and does not depend on the attacker's detailed knowledge of the physical system model. 
To make this limitation observable in practice, we present a data-driven constructive mechanism that incorporates the generic functional structure of AC power flow to generate physically consistent, manifold-constrained perturbations, providing a concrete witness of how residual-based detectors can be bypassed.
Numerical studies on multiple AC test systems characterize the conditions under which detection becomes challenging and illustrate its failure modes. The results highlight fundamental limits of residual-based detection in AC state estimation and motivate the need for complementary defenses beyond measurement consistency tests.

\end{abstract}

\begin{IEEEkeywords}
False data injection attacks, bad data detection, AC state estimation, measurement manifold, detectability limits
\end{IEEEkeywords}

%%------------Introduction--------------%%
\section{Introduction}

The increasing digitization of power system monitoring and control has exposed state estimation to a growing class of cyber threats \cite{mantravadi2020securing,stouffer2022guide}. Among these, false data injection attacks (FDIAs) pose a particularly serious risk, as they manipulate measurement data to corrupt state estimates while evading conventional detection mechanisms \cite{liu2011false}. State estimation underpins a wide range of operational functions, including dispatch, contingency analysis, and protection coordination. Undetected errors in estimated states can therefore propagate to downstream control and operational decisions, potentially compromising system reliability and causing significant economic losses \cite{8686241,8468098}. In practice, FDIAs may arise through compromised field devices or communication and data aggregation infrastructure \cite{liu2011false,wang2020detection,hug2012vulnerability,zhang2018limited}. Real-world cyber incidents such as Stuxnet attack \cite{economist2010cyber} and Triton malware \cite{di2018triton} further illustrate the feasibility and impact of cyber intrusions on power-system infrastructures.

In current operational practice, detection of abnormal measurements is primarily embedded within topology-aware state estimation through residual-based detection mechanisms, which evaluate measurement consistency via residual errors relative to the physical measurement model. Among these, the statistical bad data detector (BDD) based on the chi-squared test serves as the canonical and most widely deployed instance \cite{liu2011false,liu2025survey}. More broadly, residual-based detection has been extended to learning-based mechanisms, motivated by increasing sensing density \cite{liu2025survey} and the need to implicitly learn measurement consistency when system topology or parameters are uncertain. Representative examples include autoencoder-based detectors \cite{musleh2022lstmAE,zideh2023PIConvAE}, subspace- and projection-based detection methods \cite{yu2015blind,wang2019cpad}, and generative-model based detectors \cite{liu2025survey}, which assess consistency through reconstruction or projection errors relative to an explicitly modeled or implicitly learned measurement mapping. Compared with time-domain anomaly detection methods, which rely on temporal changes rather than physical consistency and require sequential observations, residual-based detectors remain the primary detection paradigm in practical power grid operations owing to their tight integration with physical network models and state estimation pipelines \cite{liu2011false}. Most residual-based detectors share a common underlying assumption: malicious measurements induce noticeable deviations from nominal system behavior that manifest as abnormal residual behavior under the physical measurement model.

However, this paper shows that this assumption is not guaranteed. When FDIAs produce manipulated measurements that remain sufficiently close to the measurement manifold induced by power flow relations and measurement redundancy, residual-based detectors can become ineffective. This phenomenon reflects a fundamental detectability limit that is intrinsic to the geometry of the measurement manifold, rather than to any specific residual-based detection algorithm. Evidence of this limitation has appeared implicitly across a broad body of FDIA-related research. In model-based FDIA settings, physically consistent false measurements have been shown to evade residual-based BDD when detailed system model information is available to the attacker \cite{liu2011false}. To relax this knowledge assumption, a class of model-free FDIA methods has been developed that rely solely on historical measurement data to implicitly learn the underlying measurement structure. Representative approaches include independent component analysis \cite{esmalifalak2015stealthy}, principal component analysis \cite{yu2015blind}, low-rank matrix approximation \cite{yang2023false}, and matrix reconstruction techniques based on eigenvalue decomposition \cite{yang2022blind}, which are primarily effective in linear DC settings. To better handle the nonconvex nature of AC power flow relations, more recent studies have explored autoencoder- and generative-adversarial-network-based methods \cite{musleh2022attack,costilla2022attack,jiao2021new,wang2019cyber}. While these FDIA studies further illustrate the vulnerability of residual-based detection mechanisms, they typically exploit proximity to the measurement manifold in a heuristic manner and do not characterize the conditions under which residual-based detectors lose detectability. As a result, the detectability limit itself remains largely unformalized.

To formalize the detectability limit of residual-based detection, this paper revisits FDIA detectability from a geometric perspective. We show that detectability limit is a property of the measurement manifold: when manipulated measurements preserve physical consistency and remain on the measurement manifold, distinguishing them from nominal measurements becomes fundamentally difficult for residual-based detection schemes. Here, the measurement manifold refers to the low-dimensional structure induced in the measurement space by physical laws and measurement redundancy \cite{wang2019cyber}.

To make this detectability limit observable in a systematic and reproducible manner, we introduce a data-driven constructive mechanism that reconstructs the measurement manifold from historical data and generates physically consistent perturbations constrained to it. This construction is used solely as a diagnostic tool to expose and illustrate the detectability limit, rather than as its underlying cause. A natural starting point for such a construction is the autoencoder, which is widely used to learn low-dimensional representations of measurement data. However, as we show, standard autoencoder architectures do not guarantee manifold-consistent perturbations in nonlinear AC systems. This motivates a customized, physics-guided design that explicitly incorporates the functional structure of AC power flow. In particular, we leverage a symbolic basis transformation that lifts the system state into a higher-dimensional representation in which the measurement mapping becomes linear, while encoding dominant nonlinear dependencies, such as trigonometric and bilinear interactions, directly into the representation. As a result, the decoder operates on a space that already reflects the physical structure of the measurement mapping. This construction enables formal guarantees regarding when residual-based detectors lose detectability, in contrast to prior learning-based approaches that rely purely on data and offer no such assurances.

For illustrating the detectability limit in practice, we conduct experiments on IEEE 14-, 30-, 39-, 57-, 118-, and 200-bus systems, covering a wide range of network sizes and operating conditions. Rather than benchmarking attack performance, the experiments are designed to expose failure modes of residual-based detection under different system and data conditions. Historical measurement data are generated using MATPOWER \cite{zimmerman2010matpower}, with realistic load variations incorporated through real-world power consumption profiles. This use of historical measurement data is consistent with prior studies on model-free FDIA analysis and reflects contemporary power system data-handling practices, where measurement data are increasingly processed by external analytics platforms \cite{bidgely, verdigris}, introducing potential exposure during data transmission and storage \cite{banik2024survey}. We further examine the sensitivity of the observed detectability limits to the amount of available historical data.

The remainder of this paper is organized as follows. Section \ref{sec:background} reviews the FDIA problem in AC power systems and examines key factors governing residual-based detection and detectability. Section \ref{sec:attack} introduces a constructive, data-driven mechanism used to reveal the detectability limits associated with physically consistent measurements constrained to the measurement manifold. Section \ref{sec:experiments} presents numerical studies on representative AC power system test cases to illustrate the detectability limit in practice. Section~\ref{sec:conclusion} concludes the paper.

% #############  Preliminary  #############
\section{Measurement Geometry and Limits of Residual-Based Detection}
\label{sec:background}

This section reviews AC state estimation and residual-based detection from a measurement-geometry perspective, emphasizing how physical consistency and manifold structure determine the fundamental limits of detectability. While the classical chi-squared BDD is used as a representative analytical instance, the geometric insights and detectability limits developed in this section apply more broadly to residual-based detection schemes that assess measurement consistency via residual or reconstruction errors.

\vspace{-0.5em}
\subsection{AC State Estimation and Measurement Consistency}

In power system state estimation, the control center collects a set of measurements $\z=(z_1,\cdots,z_{\dimz})\in \mathbb{R}^{\dimz}$ and estimates the system states $\x=(x_1,\cdots,x_{\dimx})\in \mathbb{C}^{\dimx}$, which typically consists of voltage magnitudes and phase angles. Under the AC measurement model, the measurements are related to the system state by
\(
\z = \h(\x) + \noise,
\)
where $\h(\cdot)$ denotes the nonlinear measurement function determined by physical power flow relationships, and $\noise\in\mathbb{R}^{\dimz}$ represents measurement noise. While the specific realization of $\h(\cdot)$ depends on network topology and line parameters, its functional form follows standard physical laws shared across AC power systems. The noise $\noise$ is commonly modeled as a Gaussian, $\noise \sim \mathcal{N}(\boldsymbol{0},\eR)$, with covariance matrix $\eR=\mathrm{diag}(\sigma_1^2,\cdots,\sigma_{\dimz}^2)$.

Given the measurements $\z$, the state estimator computes an estimate $\ex$ by solving the weighted least-squares problem \cite{abur2004power}:
\(
\ex = \arg\min_{\x} (\z-\h(\x))^\top \eR^{-1}(\z-\h(\x)).
\)
Accurate state estimation is critical for many operational tasks, including economic dispatch \cite{rahman2014impact} and contingency analysis \cite{paul2022modified}. Consequently, the ability to detect abnormal measurements before they corrupt the estimated states is of central importance to secure system operation.

\vspace{-0.5em}
\subsection{Bad Data Detection and a Geometric Interpretation}

BDD is used to evaluate the consistency between observed measurements and the estimated system state \cite{abur2004power,wang2020detection}. The associated normalized residual error is defined as
\begin{align}\label{eq:residual}
\J(\z) 
% &:= \min_{\x}\sum_{i=1}^{\dimz}(z_i-\h_i(\x))^2/\sigma_i^2 \\
&=\min_{\x}(\z-\h(\x))^\top \eR^{-1}(\z-\h(\x)). 
\end{align}
Under the assumption that $\z$ contains no abnormal data and the noise is Gaussian, the statistic $\J(\z)$ approximately follows a Chi-squared distribution with $\dimz-\dimx$ degrees of freedom \cite{abur2004power}. Accordingly, measurements are flagged as abnormal if $\J(\z)$ exceeds a threshold $\tau=\chi^2_{(\dimz-\dimx),1-\alpha}$ for a prescribed significance level $\alpha$. Although derived here for BDD, the residual $\J(\z)$ corresponds to the squared distance from the measurement $\z$ to the measurement manifold and therefore captures the same consistency criterion evaluated by a broad class of residual-based detectors.

Beyond its statistical interpretation, the residual error admits a useful geometric meaning. When measurement variances are identical, $\sigma_i^2\equiv\sigma^2$, the quantity $\J(\z)$ is proportional to the squared Euclidean distance from $\z$ to the set
\vspace{-0.5em}
\begin{equation}\label{eq:manifold}
\mathcal{H}:=\{\h(\x)\mid\x\in\mathbb{C}^{\dimx}\}.
\vspace{-0.5em}
\end{equation}
This interpretation extends to heterogeneous noise levels through appropriate rescaling of the measurement space. The set $\mathcal{H}$ represents all physically valid measurements consistent with the AC model and is referred to throughout the paper as the \emph{measurement manifold}. Due to measurement redundancy and structured nonlinear dependencies among state variables, $\mathcal{H}$ forms a low-dimensional nonlinear manifold embedded in the ambient measurement space $\mathbb{R}^{\dimz}$. This geometric viewpoint is illustrated in the left panel of Fig.~\ref{fig:manifold}. It highlights that residual-based detection fundamentally evaluates the distance between the observed measurements and the measurement manifold, rather than detecting specific attack mechanisms.

\begin{figure}[H]
    \centering
    \vskip -0.1in
    \includegraphics[width=1\linewidth]{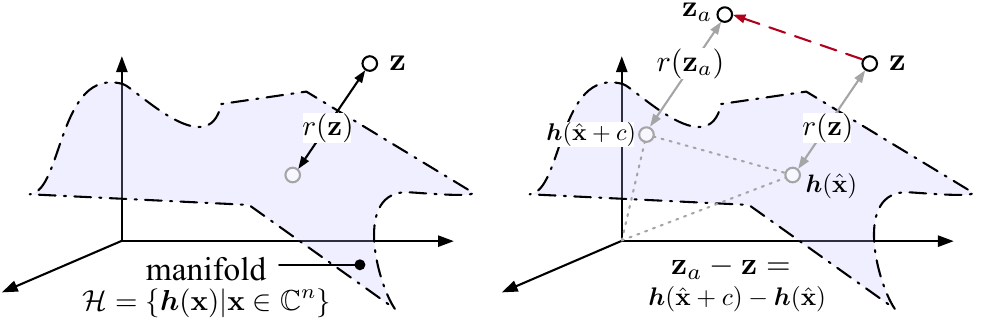}
    \vskip -0.05in
    \caption{{\bf Left}: The residual error $\J(\z)$ in BDD corresponds to the squared distance between the measurement $\z$ and the measurement manifold $\mathcal{H}$. {\bf Right}: Model-based perturbations modify measurements along the manifold $\mathcal{H}$, resulting in unchanged residual error.}
    \label{fig:manifold}
    \vskip -0.1in
\end{figure}

\vspace{-1em}
\subsection{Detectability Limits Induced by the Measurement Manifold}

The geometric interpretation above reveals a fundamental property of residual-based detection: abnormal measurements that move off the measurement manifold tend to create large residuals and are therefore detectable. In contrast, perturbations that remain close to the manifold may be difficult to distinguish from nominal operation. This observation motivates the following detectability limit, which characterizes when abnormal measurements are likely to bypass residual-based detection.

\begin{definition}[Manifold-Induced Detectability Limit]\label{def:stealth-condition}
Let the measurement manifold be $\mathcal{H}=\{\h(\x)\mid\x\in\mathbb{C}^{\dimx}\}$, and suppose BDD declares a measurement vector $\z$ abnormal whenever $\J(\z)\ge\tau=\chi^2_{(\dimz-\dimx),1-\alpha}$. For a mapping $g:\mathbb{R}^{\dimz}\to\mathbb{R}^{\dimz}$ that perturbs measurements, a necessary condition for the perturbed measurements to remain undetected under BDD is
\begin{equation}
\mathbb{P}_{\z\sim\mathcal{Z}}\!\left(\min_{\hat{\z}\in\mathcal{H}}\|g(\z)-\hat{\z}\|_2^2<\tau\right)\ge1-\alpha,
\end{equation}
where $\mathcal{Z}$ denotes the nominal measurement distribution.
\end{definition}

This detectability limit does not depend on how perturbation $g(\cdot)$ is generated. Rather, it shows an intrinsic limitation of residual-based detection: measurements that remain sufficiently close to the measurement manifold cannot be reliably distinguished from normal data based on residual magnitude alone.
Classical model-based FDIA constructions \cite{liu2011false,hug2012vulnerability} satisfy Definition~\ref{def:stealth-condition}. When full knowledge of the measurement model $\h(\cdot)$ is available, perturbations can be constructed by shifting measurements along the manifold, which leaves the residual error unchanged, as illustrated in the right-panel of Fig.~\ref{fig:manifold}. These constructions represent a special case of the more general geometric condition in Definition~\ref{def:stealth-condition}  and demonstrate that detectability limit arises from manifold geometry rather than from any specific attack algorithm.

In practice, however, access to the exact system model $\h(\cdot)$ is limited, particularly for external adversaries \cite{liu2011false}. This motivates a broader question addressed in the remainder of the paper: whether physically consistent, manifold-aligned perturbations can arise without explicit knowledge of $\h(\cdot)$, and what this implies for the fundamental limits of residual-based detection. Section~\ref{sec:attack} addresses this question by introducing a constructive mechanism used to reveal the detectability limits associated with the measurement manifold.

\section{A Constructive Mechanism for Revealing Detectability Limits}
\label{sec:attack}

Section~\ref{sec:background} showed that residual-based detection primarily tests whether measurements lie sufficiently close to the measurement manifold $\mathcal{H}$, with the chi-squared BDD serving as a canonical analytical instance. This implies a detectability limitation: any perturbation mechanism that produces measurements that remain within the detection acceptance region, as characterized in Definition~\ref{def:stealth-condition}, can be difficult to distinguish from nominal data using residual magnitude alone. The goal of this section is to introduce a constructive mechanism that makes this limitation observable using only historical measurements and generic knowledge of AC measurement structure. The mechanism is not the reason the limitation exists. It is used to demonstrate how manifold-aligned perturbations can emerge without explicit knowledge of the system-specific model $\h(\cdot)$.

From Definition~\ref{def:stealth-condition}, remaining undetected by residual-based detection requires that perturbed measurements stay close to $\mathcal{H}$. When the system model $\h(\cdot)$ is unknown, the manifold structure must be inferred implicitly from historical measurement data. A standard approach for learning low dimensional structure from high dimensional observations is the autoencoder \cite{goodfellow2016deep}. An autoencoder consists of an encoder that maps measurements to a latent representation and a decoder that reconstructs measurements from that representation. Letting $\mathrm{Enc}(\cdot)$ and $\mathrm{Dec}(\cdot)$ denote the encoder and decoder, respectively, the autoencoder is trained to minimize the reconstruction loss
\begin{equation}\label{eq:AE-standard}
\min_{\thetaEnc, \thetaDec} 
\mathbb{E}_{\z}\left\| \z - \text{Dec}\left(\text{Enc}\left(\z\right)\right)\right\|^2,
\end{equation}
where $\thetaEnc$ and $\thetaDec$ are network weights.

In DC systems, this architecture reduces to a linear model in which the encoder and decoder are linear maps. In that case, the autoencoder is equivalent to principal component analysis \cite{baldi1989neural}, which has been successfully applied to model free constructions in DC settings and admits statistical guarantees for subspace recovery \cite{yu2015blind,xiao2025}. In AC systems, however, linear representations are inadequate due to the structured nonlinearity of $\h(\cdot)$ \cite{costilla2020combining,jin2018power}. In particular, $\h(\cdot)$ contains trigonometric and bilinear interactions that are difficult to learn reliably from data without incorporating structural constraints \cite{rudy2017data}. As a result, even when a standard autoencoder achieves small reconstruction error on nominal data, its decoder behavior under latent perturbations is generally unconstrained. This means that a perturbation applied in latent space can map to a measurement that deviates substantially from the true measurement manifold, thereby failing to satisfy manifold-induced detectability limit in Definition~\ref{def:stealth-condition}. The following lemma formalizes this limitation.

\begin{lemma}[Standard Autoencoder Does Not Ensure Manifold-Aligned Perturbations]\label{lemma:standard-auencoder}
Consider a standard autoencoder $(\mathrm{Enc},\mathrm{Dec})$ trained by Eq.~(\ref{eq:AE-standard})
on noise-free measurement data $\z \in \mathcal{H}$ and achieving
exact reconstruction, i.e., $\mathrm{Dec}(\mathrm{Enc}(z))=z$ for all $z\in\mathcal{H}$. In general, this does not imply that the perturbations generated by
\(
    g(\z) = \text{Dec}(\text{Enc}(\z)+\mathbf{c})
\)
satisfies the manifold-induced detectability limit in
Definition~\ref{def:stealth-condition}, where $\mathbf{c}$ is a nonzero latent perturbation.
\end{lemma}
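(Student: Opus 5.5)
The plan is to prove this ``in general, does not imply'' statement with an explicit counterexample. I will exhibit a measurement manifold $\mathcal{H}$ and an autoencoder $(\mathrm{Enc},\mathrm{Dec})$ that reconstructs every $\z\in\mathcal{H}$ exactly, yet for which $g(\z)=\mathrm{Dec}(\mathrm{Enc}(\z)+\mathbf{c})$ lies at squared distance at least $\tau$ from $\mathcal{H}$ with probability larger than $\alpha$. The key observation is that the training objective in Eq.~(\ref{eq:AE-standard}) only constrains $\mathrm{Dec}$ on the image set $\mathcal{L}:=\mathrm{Enc}(\mathcal{H})$. Off $\mathcal{L}$ the decoder is free. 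So it suffices to show two things: $\mathcal{L}+\mathbf{c}$ can leave $\mathcal{L}$, and $\mathrm{Dec}$ can be chosen to send those shifted codes far from $\mathcal{H}$.

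\textbf{Step 1 (manifold and encoder).} I will take a simple AC-like instance, such as a single line with state $\x=(V,\theta)$ and measurements built from terms like $V^2$, $V\cos\theta$, $V\sin\theta$, so that $\mathcal{H}$ is a curved, bounded, low-dimensional set in $\mathbb{R}^{\dimz}$. Let $\mathrm{Enc}=\h^{-1}$ on $\mathcal{H}$, which is well defined by observability, so that $\mathcal{L}$ is a compact set $K$ of states, say a box.

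\textbf{Step 2 (exact-reconstruction decoder).} I will define $\mathrm{Dec}(\mathbf{u})=\h(\mathbf{u})+\phi(\mathbf{u})\,\mathbf{v}$. Here $\phi$ is a smooth bump function with $\phi\equiv0$ on $K$ and $\phi\equiv M$ on $K+\mathbf{c}\setminus K_\epsilon$, where $K_\epsilon$ is a small neighborhood of $K$. The vector $\mathbf{v}$ is a unit vector. This decoder is continuous, and it can be approximated by a neural network if a network realization is desired. It satisfies $\mathrm{Dec}(\mathrm{Enc}(\z))=\z$ on $\mathcal{H}$ and therefore attains the global minimum (zero) of Eq.~(\ref{eq:AE-standard}). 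This shows that zero training loss does not pin down the decoder's values off $K$.

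\textbf{Step 3 (distance bound).} For $\z$ with $\mathrm{Enc}(\z)+\mathbf{c}\notin K_\epsilon$, we get $g(\z)=\h(\mathbf{u})+M\mathbf{v}$. Because $\mathcal{H}$ is bounded, say contained in a ball of radius $B$, we have
\[
\min_{\hat{\z}\in\mathcal{H}}\|g(\z)-\hat{\z}\|_2\ge M-2B .
\]
Choosing $M>2B+\sqrt{\tau}$ makes every such point violate the residual threshold. Since $\mathbf{c}\neq0$ and $K$ is bounded, a set of positive, and indeed large, $\mathcal{Z}$-probability of codes is shifted outside $K_\epsilon$. For example, if $K$ is a box and $\mathbf{c}$ exceeds its width in one coordinate, every code leaves $K$, so the probability in Definition~\ref{def:stealth-condition} is $0<1-\alpha$.

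\textbf{Main obstacle.} The hardest part is making the counterexample convincing rather than artificial. One might object that a bump decoder is not what training produces. I would address this in two ways. First, the lemma only asserts non-implication, so a single exact-reconstruction minimizer suffices. Second, I would note that even the ``natural'' decoder $\mathrm{Dec}=\h$ fails whenever $\h$ is not defined on, or is not injective beyond, $K$. I would also note that a different encoder, obtained by composing with any invertible reparametrization, changes what $\mathcal{L}+\mathbf{c}$ means. This underlines that latent shifts carry no guaranteed geometric meaning relative to $\mathcal{H}$.
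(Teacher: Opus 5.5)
You rely on the same key observation as the paper's proof: exact reconstruction constrains $\mathrm{Dec}$ only on $\mathcal{L}=\mathrm{Enc}(\mathcal{H})$, so a latent shift that leaves $\mathcal{L}$ lands where the decoder is unconstrained. Where you differ is in how the conclusion is certified.

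The paper argues qualitatively: $\mathcal{L}$ is ``generally'' a proper subset, and the output ``can lie arbitrarily far.'' It illustrates this with $\mathcal{H}=[0,\infty)$, $\mathrm{Enc}(z)=z^{1/3}$, $\mathrm{Dec}(u)=u^3$, where a smooth, non-pathological decoder already leaves $\mathcal{H}$. It never checks the threshold, though. In that example the off-manifold distance is at most $|c|^3$, so the squared distance can reach $\tau$ only when $|c|\ge\tau^{1/6}$.

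Your bump decoder is adversarially chosen, but it is a genuine zero-loss minimizer. The bound $M-2B>\sqrt{\tau}$ then shows that the probability in Definition~\ref{def:stealth-condition} can drop to $0<1-\alpha$. So the paper's example buys plausibility for decoders that training might actually produce, and yours buys an actual verification of the non-implication.

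Two points need repair.

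\textbf{Step~1 does not fit the paper's $\mathcal{H}=\{\h(\x)\mid\x\in\mathbb{C}^{n}\}$.} With a $V^2$ coordinate, $\mathcal{H}$ is unbounded, so your $B$ does not exist. With $\mathrm{Enc}=\h^{-1}$, the code set $\mathcal{L}$ is an unbounded half-strip $\{V\ge 0,\ -\pi<\theta\le\pi\}$, not a compact box. If the latent code is the complex state itself, $\mathcal{L}$ is all of $\mathbb{C}$. Then $\mathcal{L}+\mathbf{c}=\mathcal{L}$, the decoder is pinned everywhere, and your construction collapses.

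Since $\mathcal{L}$ being a proper subset is exactly what the lemma hinges on, build it in deliberately. For example, take $\h$ with bounded image, such as $\theta\mapsto(\cos\theta,\sin\theta)$ with $\mathrm{Enc}$ valued in $(-\pi,\pi]$ and $|c|>2\pi+\epsilon$. Alternatively, compose $\h^{-1}$ with a diffeomorphism onto a bounded box. In either case, also extend $\mathrm{Enc}$ off $\mathcal{H}$ (e.g., by nearest-point projection) if $\mathcal{Z}$ is read as the noisy nominal distribution.

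\textbf{Your closing remark is wrong.} You say the natural decoder $\mathrm{Dec}=\h$ fails when $\h$ is not injective beyond $K$. But $\h(\mathbf{u})\in\mathcal{H}$ for every state $\mathbf{u}$, so $\mathrm{Dec}=\h$ gives zero residual for every shift. That is precisely the mechanism behind Theorem~\ref{thm:physics-guided-auencoder}. A natural decoder fails only when the latent coordinates are not state coordinates, as with the paper's $u^3$.
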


\begin{proof}
Let $\mathcal{L}:=\{\mathrm{Enc}(\z)\mid \z\in\mathcal{H}\}\subset\mathbb{R}^{d}$ denote the image of the measurement manifold $\mathcal{H}$ under the encoder. The exact-reconstruction assumption $\mathrm{Dec}(\mathrm{Enc}(\z))=\z$ for all $\z\in\mathcal{H}$ constrains the decoder only on $\mathcal{L}$, and imposes no restriction on $\mathrm{Dec}(u)$ for $u\notin\mathcal{L}$. Because $\mathcal{H}$ is a low-dimensional manifold embedded in $\mathbb{R}^{m}$, its encoded image $\mathcal{L}$ is generally a proper subset of $\mathbb{R}^{d}$.
Consequently, for any $\z\in\mathcal{H}$, there exist perturbations $\mathbf{c}$ such that
\(
\mathrm{Enc}(\z)+\mathbf{c}\notin\mathcal{L}.
\)
For such perturbed latent representations, the decoder output
\(
\mathrm{Dec}(\mathrm{Enc}(\z)+\mathbf{c})
\)
is unconstrained by the reconstruction objective and can lie arbitrarily far from the true measurement manifold $\mathcal{H}$.
Therefore, the resulting attacked measurement can violate the manifold-induced detectability limit in Definition~\ref{def:stealth-condition}.
\end{proof}

The failure mode in Lemma~\ref{lemma:standard-auencoder} can be illustrated by a simple one-dimensional example.
Let the measurement manifold be $\mathcal{H}=\{x^2\mid x\in\mathbb{R}\}$, and define the encoder and decoder as
$\mathrm{Enc}(z)=z^{1/3}$ and $\mathrm{Dec}(u)=u^3$, respectively.
This encoder-decoder pair achieves exact reconstruction on $\mathcal{H}$.
However, for $z\in\mathcal{H}$, a negative perturbation $c$ such that $\mathrm{Enc}(z)+c<0$ yields
$\mathrm{Dec}(\mathrm{Enc}(z)+c)<0\notin\mathcal{H}$.
Thus, perfect reconstruction can coexist with arbitrarily poor off-manifold behavior under small latent perturbations.

These observations indicate that the limitation of standard autoencoders is not from insufficient training or model capacity, but from the lack of structural constraints on how latent perturbations are mapped to measurement space.
To obtain a controlled perturbation mechanism that remains aligned with physically consistent structure, it is necessary to constrain the decoder so that perturbations map to a physically consistent structured set in measurement space.
This motivates the physics-guided autoencoder design introduced next.

\vspace{-1em}
\subsection{Physics-Guided Autoencoder for Structured Measurement Manifold Learning}\label{sec:AE}

For controlling how perturbations are mapped to measurement space to satisfy Definition~\ref{def:stealth-condition}, we introduce a physics-guided autoencoder that enforces such control by embedding the functional form of AC power flow into the decoder. A natural way to regulate decoder behavior is to restrict it to be linear. However, a purely linear decoder operating on the original state variables cannot represent the nonlinear AC measurement function. To reconcile these requirements, we introduce a symbolic basis transformation $f(\x)$ that lifts the system state $\x$ into a higher-dimensional feature space in which the measurement mapping becomes linear, namely
\begin{equation}\label{eq:h-rewrite}
    \h(\x) = \A f(\x),
\end{equation}
for some constant matrix $\A$. In the remainder of this subsection, we assume the existence of a symbolic basis $f(\cdot)$ that satisfies Eq.~(\ref{eq:h-rewrite}). In Section~\ref{sec:linearbasis}, we show that such a basis can be constructed directly from standard AC power flow equations. The symbolic basis explicitly encodes known nonlinear structures, including trigonometric and bilinear interactions, which shifts the burden of nonlinearity away from the decoder.

Under this lifted representation, the decoder can be implemented as a linear operator while retaining sufficient expressive power through a nonlinear encoder. This design yields predictable geometric behavior for latent perturbations and supports provable satisfaction of the acceptance condition in Definition~\ref{def:stealth-condition}. The resulting autoencoder architecture is trained by minimizing the reconstruction loss
\begin{equation}\label{eq:AE}
\min_{\thetaEnc, \thetaDec} 
\mathbb{E}_{\z}\| \z - \text{Dec}(f(\text{Enc}(\z))) \|^2.
\end{equation}

We refer to this architecture as the \emph{physics-guided autoencoder}. In contrast to a standard autoencoder whose fully nonlinear decoder is unconstrained outside the encoded image of the measurement manifold, the linear decoder in the lifted space prevents latent perturbations from being mapped to arbitrary off-manifold measurements. Theorem~\ref{thm:physics-guided-auencoder} formalizes this property by showing that the resulting perturbed measurements satisfy Definition~\ref{def:stealth-condition} under the stated assumptions.

\begin{theorem}[Manifold-Aligned Perturbations Under a physics-guided Autoencoder]
\label{thm:physics-guided-auencoder}
Assume the AC measurement model admits the lifted representation
\(
\h(\x)=\A f(\x),
\)
where $f:\mathbb{C}^{\dimx}\to\mathbb{R}^{p}$ is a known symbolic basis and
$\A\in\mathbb{R}^{\dimz\times p}$ is an unknown constant matrix.
Let the set of nominal measurements be $\mathcal{Z} = \{\z | \z=\h(\x)+\noise, \x\in\mathbb{C}^n, \noise\sim\mathcal{N}(\mathbf{0},\eR)\}$ and satisfy the detection acceptance condition
\(
\mathbb{P}_{\z\sim\mathcal{Z}}
\left(
\min_{\hat{\z}\in\mathcal{H}}\|\z-\hat{\z}\|_2^2 < \tau
\right)\ge 1-\alpha,
\)
where threshold $\tau=\chi^2_{(\dimz-\dimx), 1-\alpha}$ with significance level $\alpha$.
Consider the physics-guided autoencoder trained on samples from $\mathcal{Z}$ with encoder $\mathrm{Enc}(\cdot)$ and a linear decoder $D(\cdot)$, and suppose it achieves exact reconstruction on $\mathcal{Z}$, i.e.,
\(
D(f(\mathrm{Enc}(\z)))=\z, \forall\z\in\mathcal{Z}.
\)
Define the perturbed measurement mapping
\begin{equation}\label{eq:attack-gen}
\tilde{g}(\z)=
\z + 
D\!\left(f\!\left(\mathrm{Enc}(\z)+\mathbf{c}\right)\right)
-
D\!\left(f\!\left(\mathrm{Enc}(\z)\right)\right)
,
\end{equation}
where \(\mathbf{c}\neq \mathbf{0}\) is a latent-space perturbation.
Then the perturbed measurements satisfies Definition~\ref{def:stealth-condition}, namely
\begin{equation}
\mathbb{P}_{\z\sim\mathcal{Z}}
\left(
\min_{\hat{\z}\in\mathcal{H}}\|\tilde{g}(\z)-\hat{\z}\|_2^2 < \tau
\right)\ge 1-\alpha,
\end{equation}
where $\mathbb{P}_{\z\sim \mathcal{Z}}$ denotes the empirical probability measure over the nominal measurement set $\mathcal{Z}$.
\end{theorem}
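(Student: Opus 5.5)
The plan is to show that the perturbation $\tilde g(\z)-\z$ is itself an exact displacement along $\mathcal{H}$, of the form $\h(\x')-\h(\x)$. The residual of $\tilde g(\z)$ is then that of $\z$ with the noise carried along, and the acceptance probability of the nominal data transfers directly to the perturbed data.

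\textbf{Step 1 (linear decoder acts as $\A$).} Write $D(u)=\mathbf{W}u$ for some matrix $\mathbf{W}\in\mathbb{R}^{\dimz\times p}$. By linearity, \eqref{eq:attack-gen} becomes
\[
\tilde g(\z)=\z+\mathbf{W}\bigl(f(\mathbf{u}+\mathbf{c})-f(\mathbf{u})\bigr),\qquad \mathbf{u}=\mathrm{Enc}(\z).
\]
I would identify $\mathbf{W}$ with $\A$ on the relevant range. Exact reconstruction gives $\mathbf{W}f(\mathrm{Enc}(\z))=\z=\A f(\x)+\noise$ for every $\z\in\mathcal{Z}$. I would interpret the encoder output as a latent state, so that the latent space is $\mathbb{C}^{\dimx}$ and $f$ applies to it. Exact reconstruction then says that $\mathbf{W}f(\mathbf{u})$ is a point of the lifted linear model. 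The natural assumption to make explicit is that $\mathbf{W}$ agrees with $\A$ on $\mathrm{span}\{f(\x)\}$, i.e.\ the decoder has recovered the column structure of $\A$. Under it, $\mathbf{W}f(\mathbf{v})\in\mathcal{H}$ for every latent $\mathbf{v}$.

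\textbf{Step 2 (perturbation is an on-manifold shift).} Using Step 1, I obtain
\[
\tilde g(\z)=\z+\h(\x')-\h(\x'')
\]
for latent states $\x'=\mathbf{u}+\mathbf{c}$ and $\x''=\mathbf{u}$. Let $\x^\star$ be the minimizer in the residual for $\z$, so that $\z=\h(\x^\star)+\mathbf{r}$ with $\|\mathbf{r}\|_2^2=\min_{\hat\z\in\mathcal{H}}\|\z-\hat\z\|_2^2$. Because $\mathcal{H}$ is the image of the linear map $\A$ over $\mathrm{span}\{f\}$, the point $\h(\x^\star)+\h(\x')-\h(\x'')$ is again of the form $\A f(\cdot)$. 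This is the delicate point: it requires closure of $\{f(\x)\}$ under such differences, or equivalently treating $\mathcal{H}$ as the image $\A\,\mathrm{range}(f)$ in the lifted coordinates. It therefore lies in $\mathcal{H}$, and
\[
\min_{\hat\z\in\mathcal{H}}\|\tilde g(\z)-\hat\z\|_2^2\le\|\mathbf{r}\|_2^2=\min_{\hat\z\in\mathcal{H}}\|\z-\hat\z\|_2^2 .
\]

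\textbf{Step 3 (transfer of probability).} This pointwise inequality holds for every $\z\in\mathcal{Z}$. Hence the event $\{\min_{\hat\z\in\mathcal{H}}\|\z-\hat\z\|_2^2<\tau\}$ is contained in the corresponding event for $\tilde g(\z)$. Monotonicity of the empirical measure then gives probability at least $1-\alpha$, which is exactly the hypothesis of Definition~\ref{def:stealth-condition}.

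\textbf{Main obstacle.} The obstacle is Step 1/2: justifying that the learned linear decoder maps the entire lifted feature set, not only its encoded image, into $\mathcal{H}$, and that the shifted point stays on $\mathcal{H}$. This is precisely where the argument departs from Lemma~\ref{lemma:standard-auencoder}. My first attempt would be to use linearity to extend the constraint from the finitely many training features $f(\mathrm{Enc}(\z))$ to their span. I would assume these training features span $\mathrm{range}(f)$, which is a richness condition on the data. Failing that, I would state the identification $\mathbf{W}|_{\mathrm{span} f}=\A|_{\mathrm{span} f}$ as the operative assumption implicit in ``exact reconstruction''.
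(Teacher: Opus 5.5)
\textbf{There is a genuine gap: Steps~1--2 do not produce the pointwise residual inequality that Step~3 needs.} Your Step~3 would be a clean finish if that inequality held, but it does not follow from what you set up.

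\textbf{Step~1 contradicts the hypotheses.} Its operative assumption is that $\mathbf{W}$ agrees with $\A$ on $\mathrm{span}\{f(\x)\}$, so that $\mathbf{W}f(\mathbf{v})\in\mathcal{H}$ for every latent $\mathbf{v}$. This is not implicit in exact reconstruction; it contradicts it. Exact reconstruction says $\mathbf{W}f(\mathrm{Enc}(\z))=\z=\h(\x)+\noise$, which is a \emph{noisy} point, not a point $\A f(\cdot)$ of the lifted model. If $\mathbf{W}=\A$ on these features, you would get $\z=\h(\mathrm{Enc}(\z))\in\mathcal{H}$ for every nominal sample. But $\mathcal{H}$ is a lower-dimensional image of $\mathbb{C}^{\dimx}$, and $\z=\h(\x)+\noise$ with $\noise\sim\mathcal{N}(\mathbf{0},\eR)$ lies off it almost surely.

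The richness route fails for the same reason. If the training features $f(\mathrm{Enc}(\z_j))$ span, linearity pins $\mathbf{W}$ down as the map sending them to the noisy $\z_j$, and that map is not $\A$. Notice also that, had you granted the assumption, exact reconstruction would give $\tilde g(\z)=\mathbf{W}f(\mathbf{u}+\mathbf{c})=\h(\mathbf{u}+\mathbf{c})\in\mathcal{H}$ at once, with zero residual. Step~2 would then be superfluous, which shows the assumption has quietly reduced the setting to noise-free data.

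\textbf{Step~2's closure is false, not merely delicate.} Without the identification, Step~2 cannot even be written as a shift $\h(\x')-\h(\x'')$. The set $\mathrm{range}(f)$ is not closed under $f(\x^\star)+f(\x')-f(\x'')$, because $f$ contains both $V_i$ and the self-product $V_iV_i\cos 0=V_i^2$. Let $a,b,c$ be the $i$-th voltage magnitudes of $\x^\star,\x',\x''$. Those two entries then become $(a+b-c,\;a^2+b^2-c^2)$, which has the form $(t,t^2)$ only if $(a-c)(b-c)=0$. This fails as soon as $\mathbf{c}$ moves that magnitude and the WLS estimate $\x^\star$ differs from $\mathrm{Enc}(\z)$ there. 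A one-bus toy model with measurements $(V_i,\,-b_{ii}V_i^2)$ shows that $\mathcal{H}$ itself, a parabola, inherits this failure.

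Neither of your fixes repairs this. Writing $\mathcal{H}=\A\,\mathrm{range}(f)$ is just the definition, not a closure property. Passing to the linear hull $\A\,\mathrm{span}(f)$ changes the statistic to a distance to a linear space, which is no longer the residual calibrated by $\chi^2_{(\dimz-\dimx),1-\alpha}$.

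\textbf{How the paper proceeds.} The paper's proof is organized differently. It first uses exact reconstruction to cancel $\z-D(f(\mathrm{Enc}(\z)))$, so that $\tilde g(\z)=D(f(\mathrm{Enc}(\z)+\mathbf{c}))$. It then places this output in the decoder's image, identifies that image with the nominal set $\mathcal{Z}$, and inherits the nominal acceptance probability. It never equates $D$ with $\A$, and it never needs $\mathcal{H}$ to be closed under shifts. That membership step is asserted there rather than derived, so a pointwise inequality plus monotonicity would be the more rigorous ending. Your Steps~1--2, however, do not supply that inequality.
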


\begin{proof}
Let $\mathcal{L}_f:=\{f(\mathrm{Enc}(\z))\mid \z\in\mathcal{Z}\}\subset\mathbb{R}^{p}$ denote the image of the nominal measurements under the encoder and symbolic basis transformation $f$. The exact reconstruction assumption indicates that $D(\mathcal{L}_f) = \mathcal{Z}$. Since $D$ is linear, its action on $\mathcal{L}_f$ is fully characterized by its action on the affine combinations of elements in $\mathcal{L}_f$, and no additional distortion is introduced beyond the learned linear structure.
Under exact reconstruction, Eq.~(\ref{eq:attack-gen}) reduces to 
\(
\tilde{g}(\z)=
\mathrm{Dec}\!\left(f\!\left(\mathrm{Enc}(\z)+\mathbf{c}\right)\right).
\) 
By construction, the output of the decoder lies in the image of $D$, and hence $\tilde{g}(\z) \in \mathcal{Z}$. 
Finally, by the dataset assumption that the nominal measurements satisfy
\(
\mathbb{P}_{\z\sim\mathcal{Z}}
(
\min_{\hat{\z}\in\mathcal{H}}\|\z-\hat{\z}\|_2^2 < \tau
)\ge 1-\alpha,
\) we obtain the conclusion 
\(
\mathbb{P}_{\z\sim\mathcal{Z}}
(
\min_{\hat{\z}\in\mathcal{H}}\|\tilde{g}(\z)-\hat{\z}\|_2^2 < \tau
)\ge 1-\alpha.
\)
\end{proof}

Theorem~\ref{thm:physics-guided-auencoder} clarifies why the lifted representation is useful for studying detectability limits. The symbolic basis $f(\cdot)$ plays two roles. First, it captures the dominant trigonometric and bilinear structure of AC measurements, which reduces the need for an unconstrained nonlinear decoder. Second, it confines latent perturbations to a structured feature space aligned with the lifted physics, which provides controlled  perturbations and preserves proximity to $\mathcal{H}$ by construction.

The existence of a structured transformation $f(\cdot)$ is natural in AC power systems because standard power flow equations contain known trigonometric and bilinear interactions among voltages and phase angles. In the next subsection, we explicitly construct a symbolic basis function $f(\cdot)$ such that the AC measurement model can be written as $\h(\x)=\A f(\x)$, which instantiates the assumption of Theorem~\ref{thm:physics-guided-auencoder}.

\vspace{-0.5em}
\subsection{Symbolic Basis for Structuring AC Measurement Manifolds}
\label{sec:linearbasis}

We construct the symbolic basis \( f(\cdot) \) by exploiting the intrinsic bilinear structure of AC power system measurements revealed by the standard Y-bus formulation. Let \( \mathbf{Y} = \mathbf{G} + j\mathbf{B} \in \mathbb{C}^{n \times n} \)
denote the network admittance matrix, where $\mathbf{G} = (g_{ik}) \in \mathbb{R}^{n\times n}$ and $\mathbf{B} = (b_{ik}) \in \mathbb{R}^{n\times n}$ are the conductance and susceptance matrices, respectively. Let the system state be the complex voltage phasor vector: 
\begin{equation}\label{eq:states}
\x =
(
V_1 e^{j\theta_1}, \cdots, V_n e^{j\theta_n}
)
\in \mathbb{C}^n,
\end{equation}
where $V_i\in\mathbb{R}$ and $\theta_i\in\mathbb{R}$ denote the voltage magnitude and phase angle at bus $i$, and $j$ denotes the imaginary unit. Then, the complex bus power injections $\mathbf{S}=\mathbf{P}+j\mathbf{Q}$ satisfy \cite{abur2004power}:
\begin{equation}\label{eq:ybus-power}
\mathbf{S}
= \mathrm{diag}(\x) \mathbf{Y}^* \x^* \Rightarrow 
\mathbf{S}_i
= \sum_{k=1}^n Y_{ik}^{\ast}\x_i\x_k^{\ast},
\end{equation}
where $\mathbf{P}$ and $\mathbf{Q}$ denote the active and reactive power injection vectors, respectively, and $(\cdot)^\ast$ denotes complex conjugation.
 Eq.~(\ref{eq:ybus-power}) implies that AC power measurements depend linearly on the set of bilinear state products $\{\x_i\x_k\}_{i,k}$, with coefficients fully determined by the network admittance matrix. This observation motivates 
 a lifted representation of the system state using these bilinear terms as coordinates, enabling a linear parameterization of the AC measurement function $\h(\cdot)$. 
 
In practice, $\h(\cdot)$ consists of standard quantities including bus active and reactive power injections, branch active and reactive power flows, and bus voltage magnitudes \cite{abur2004power}:
\begin{equation}\label{eq:h}
\boldsymbol{h}(\cdot) = \begin{bmatrix} p_{i}(\cdot) & q_{i}(\cdot) & f^{p}_{ik}(\cdot) & f^{q}_{ik}(\cdot) & V_i(\cdot) \end{bmatrix}^\top,
\end{equation}
for all buses \( i \) and branches \( (i,k) \). Here, \( p_i(\cdot) \) and \( q_i(\cdot) \) denote bus active and reactive power injection functionals, \( f^{p}_{ik}(\cdot) \) and \( f^{q}_{ik}(\cdot) \) denote branch active and reactive power flow functionals, and \( V_i(\cdot) \) denotes the bus voltage magnitude functional. The realizations of these functionals in terms of \( \x \) are:
\begin{equation}\label{eq:h-detail}
\begin{aligned}
    p_{i}(\x) &= \mathrm{Re}(\mathbf{S}_i) = \sum_{k=1}^n g_{ik}\mathrm{Re}(\x_i\x_k^{\ast}) + b_{ik}\mathrm{Im}(\x_i\x_k^{\ast}),  \\
    q_{i} (\x) &= \mathrm{Im}(\mathbf{S}_i) = \sum_{k=1}^n g_{ik}\mathrm{Im}(\x_i\x_k^{\ast}) - b_{ik}\mathrm{Re}(\x_i\x_k^{\ast}), \\
    f^{p}_{ik}(\x) &= g_{ik}^{\ell}(\mathrm{Re}(\x_i\x_i^{\ast}) - \mathrm{Re}(\x_i\x_k^{\ast})) - b_{ik}^{\ell} \mathrm{Im}(\x_i\x_k^{\ast}), \\
    f^{q}_{ik}(\x) &= -b_{ik}^{\ell}(\mathrm{Re}(\x_i\x_i^{\ast}) - \mathrm{Re}(\x_i\x_k^{\ast})) - g_{ik}^{\ell} \mathrm{Im}(\x_i\x_k^{\ast}), \\
    V_i(\x) &= V_i, 
\end{aligned}
\end{equation}
where $\mathrm{Re}(\cdot)$ and $\mathrm{Im}(\cdot)$ denote the real and imaginary parts, respectively. \(g_{ik},b_{ik}\) are Y-bus entries, and \(g^{\ell}_{ik}, b^{\ell}_{ik}\) are the series admittance parameters of branch \((i,k)\). From Eq.~(\ref{eq:h-detail}), each measurement functional is linear in the bilinear terms \(\{V_i, \mathrm{Re}(\x_i \x_k^{*}),\, \mathrm{Im}(\x_i \x_k^{*})\}_{i,k}\), with \(\mathrm{Re}(\x_i \x_i^{*})\) as the self-product case. Using
\begin{equation}\label{eq:expand}
\mathrm{Re}(\x_i \x_k^{*}) = V_i V_k \cos(\theta_{ik}), \ 
\mathrm{Im}(\x_i \x_k^{*}) = V_i V_k \sin(\theta_{ik}),
\end{equation}
where \(\theta_{ik}=\theta_i-\theta_k\), we formally define the symbolic basis
\begin{equation}\label{eq:basis}
f(\x)
=
\begin{bmatrix}
V_i \\
V_i V_k \cos(\theta_{ik}) \\
V_i V_k \sin(\theta_{ik})
\end{bmatrix},
\end{equation}
which yields a lifted representation under which the AC measurement mapping admits a linear parameterization. For example, the measurements associated with bus \( i \) satisfy:
\begin{align}\label{eq:linear-h-ij}
    \begin{bmatrix}
        p_{i}(\x)\\
        q_{i}(\x)\\
        f^{p}_{ik}(\x)\\
        f^{q}_{ik}(\x)\\
        V_{i}(\x)
    \end{bmatrix}
    =
    \underbrace{
    \begin{bmatrix}
        0 & g_{ik} & b_{ik} \\
        0 & -b_{ik} & g_{ik} \\
        0 & -\tilde{g}_{ik}^{\ell} & -b_{ik}^{\ell} \\
        0 & \tilde{b}_{ik}^{\ell} & -g_{ik}^{\ell} \\
        1 & 0 & 0
    \end{bmatrix}
    }_{A_{i}}
    \begin{bmatrix}
        V_i \\
        V_i V_k \cos(\theta_{ik}) \\
        V_i V_k \sin(\theta_{ik})
    \end{bmatrix}.
\end{align}
where $\tilde{g}_{ik}^{\ell} = \left\{\begin{array}{l}g_{ik}^{\ell}, i\neq k\\-g_{ik}^{\ell}, i=k\end{array}\right.$ and
$\tilde{b}_{ik}^{\ell} = \left\{\begin{array}{l}b_{ik}^{\ell}, i\neq k\\-b_{ik}^{\ell}, i=k\end{array}\right.$.
By stacking Eq.~(\ref{eq:linear-h-ij}) over all buses and incident branches, we obtain
the global linear representation
\(
    \h(\x) = \A f(\x),
\)
where \( \A = \text{diag}(A_1,\cdots,A_n) \in \mathbb{R}^{m \times p} \) is a block-diagonal matrix capturing system-specific coefficients. This construction explicitly realizes the structural assumption in Theorem~\ref{thm:physics-guided-auencoder}. 

The symbolic basis $f(\x)\in\mathbb{R}^p$ defined in Eq.~(\ref{eq:basis}) has dimension $p=n+2n^2 = \mathcal{O}(n^2)$ when constructed over all index pairs $(i,k)$. If connectivity information is available, the symbolic basis can be restricted to voltage interaction terms associated with adjacent buses, i.e., $k\in \mathcal{N}_{i}$, where $\mathcal{N}_{i}$ denotes the neighbor set of bus $i$. Under this sparsity-aware construction, the dimension reduces to $p=3n+2\sum_i |\mathcal{N}_i|$ which scales linearly with the number of branches and is typically much smaller than $\mathcal{O}(n^2)$ for sparse networks.

In the representation \(
    \h(\x) = \A f(\x)
\),
the matrix \( \A \) depends on unknown line parameters and is therefore unavailable to the attacker. Nevertheless, the lifted formulation exposes a linear structure of the AC measurement manifold in the coordinates $f(\x)$. This observation places the AC FDIA problem in direct analogy with model-free constructions in DC systems, where parameter matrices are unknown but the relevant measurement structure remains learnable from data \cite{li2018false,zhang2018can}. As established in Theorem~\ref{thm:physics-guided-auencoder}, this structure enables the physics-guided autoencoder to learn the measurement manifold and generate perturbed measurements that satisfy Definition~\ref{def:stealth-condition}.

% \clearpage
% ############# Experiments  #############
\section{Numerical Results}\label{sec:experiments}

This section provides numerical evidence supporting the central claim of this paper: residual-based detectors exhibit an inherent detectability limit when perturbed measurements remain sufficiently close to the measurement manifold. 
We examine whether the proposed physics-guided autoencoder can systematically generate physically consistent perturbations constrained to the measurement manifold, thereby revealing this detectability limit. Comparisons with representative model-free FDIA constructions are included to illustrate how deviations from manifold consistency translate into increased detectability.

\vspace{-0.5em}
\subsection{Experimental Setup}

We evaluate the detectability limit across diverse system topologies, including the IEEE 14-bus, 30-bus, 39-bus, 57-bus, 118-bus, and 200-bus systems. For each system, time-series measurements of $\z$ of length 1440 are generated by repeatedly solving the AC power flow equations using MATPOWER. A sensitivity analysis with respect to data volume is provided in Section~\ref{sec:exp-volume}. To enhance realism, real-world active power demand profiles from the Duquesne Light Company (Pittsburgh) are incorporated into the simulations. In addition, both load and generation are randomly scaled to introduce operating variability. Measurement noise with 2\% standard deviation is injected, consistent with typical SCADA accuracy levels.

For each system, the proposed physics-guided autoencoder is trained according to Eq.~(\ref{eq:AE}), using the symbolic basis defined in Eq.~(\ref{eq:basis}). After training, perturbed measurements $\z_a$ are generated following Eq.~(\ref{eq:attack-gen}). The perturbation vector $\mathbf{c}$ controls the direction and magnitude of the latent-space perturbation. In this study, we fix \(\mathbf{c} = (0.1, \dots, 0.1)\) across all test systems to enable consistent comparison. Under this choice of $\mathbf{c}$, the resulting impact on system states is analyzed in Section \ref{sec:exp-impact}. A sensitivity analysis of the detectability limit with respect to the perturbation magnitude is further provided in Section~\ref{sec:exp-c}.

To assess detectability under residual-based detection, we consider two representative detectors. First, we evaluate the classical chi-squared BDD, which is the most widely deployed residual-based method in practice. The residual error of $\z_a$ is calculated using Eq.~(\ref{eq:residual}), and the rate of successfully bypassing BDD is defined as
\(
    \text{succ}_{\text{BDD}} = \mathbb{P}(\J(\z_a) \le \chi^2_{(\dimz-\dimx), 1-\alpha}),
\)
where $\alpha$ is the significance level with default value $\alpha=0.05$. Second, to examine whether the observed detectability limit extends beyond BDD, we consider a learning-based residual detector \cite{musleh2022lstmAE} that flags anomalies based on reconstruction error. The rate of bypassing the learning-based detector is 
\(
     \text{succ}_{\text{learn}} = \mathbb{P}(\tilde{\J}(\z_a) \le \tau_{\text{learn}, 1-\alpha}),
\)
where $\tilde{\J}$ denotes the reconstruction error, and $\tau_{\text{learn}, 1-\alpha}$ is set to be the $(1-\alpha)$ empirical quantile of $\tilde{\J}(\z)$ on nominal data.

For reference, we include state-of-the-art model-free FDIA constructions to illustrate how deviations from manifold consistency affect detectability. Specifically, we consider AE-GAN~\cite{costilla2022attack}, which combines autoencoders with generative adversarial training, and SA-GAN~\cite{jiao2021new}, a self-attention-based GAN framework for generating realistic false measurements. These baselines are not treated as competitors, but as illustrative mechanisms that highlight the role of manifold consistency in residual-based detection.

Implementation details are as follows. In the physics-guided autoencoder model, the latent dimension is set to $d=n$, equal to the number of system states, to faithfully capture the state estimation process. A sensitivity analysis with respect to this choice is presented in Section~\ref{sec:exp-latent}. Both the encoder and decoder comprise five fully connected layers, with at most 64 neurons per layer. The autoencoder is trained for a maximum of 1000 iterations using the Adam optimizer with a learning rate of $2\times 10^{-4}$ and batch size 50. System measurements are generated using MATLAB 2022b, while all subsequent computations are executed in Python 3.12 on a Windows 10 machine equipped with an Intel Core i7 processor (2.2 GHz) and 16 GB of RAM.

\vspace{-0.5em}
\subsection{Evaluation of AC Measurement Manifold Learning} 

We first examine whether the proposed physics-guided autoencoder can faithfully reconstruct the AC measurement manifold, as required to expose the detectability limit characterized in Theorem~\ref{thm:physics-guided-auencoder}. Because the measurement space is high-dimensional and cannot be directly visualized, we consider representative low-dimensional projections using active power flow measurements from the IEEE 14-bus system. Specifically, four representative active power flow measurements between buses (2,3), (2,5), (4,7), and (6,13), are selected and denoted as pf1–pf4 in Fig.~\ref{fig:manifold-learn-correct}. Different combinations of these measurements are used to form three-dimensional projections of the measurement manifold. 
For each column in Fig.~\ref{fig:manifold-learn-correct}, the top row shows the original measurements $\z$, while the bottom row shows the reconstructed measurements $\mathrm{Dec}(f(\mathrm{Enc}(\z)))$. The close alignment between the original and reconstructed manifolds across all projections indicates that the learned representation preserves the geometric structure of the AC measurement manifold. These results empirically support the theoretical claim that encoding dominant nonlinear dependencies via the symbolic basis enables a linear decoder to reconstruct the measurement manifold, which is a key prerequisite for revealing the detectability limit of residual-based detection.

\begin{figure}[H]
\centering
\vskip -0.1in
\includegraphics[width=1\linewidth]{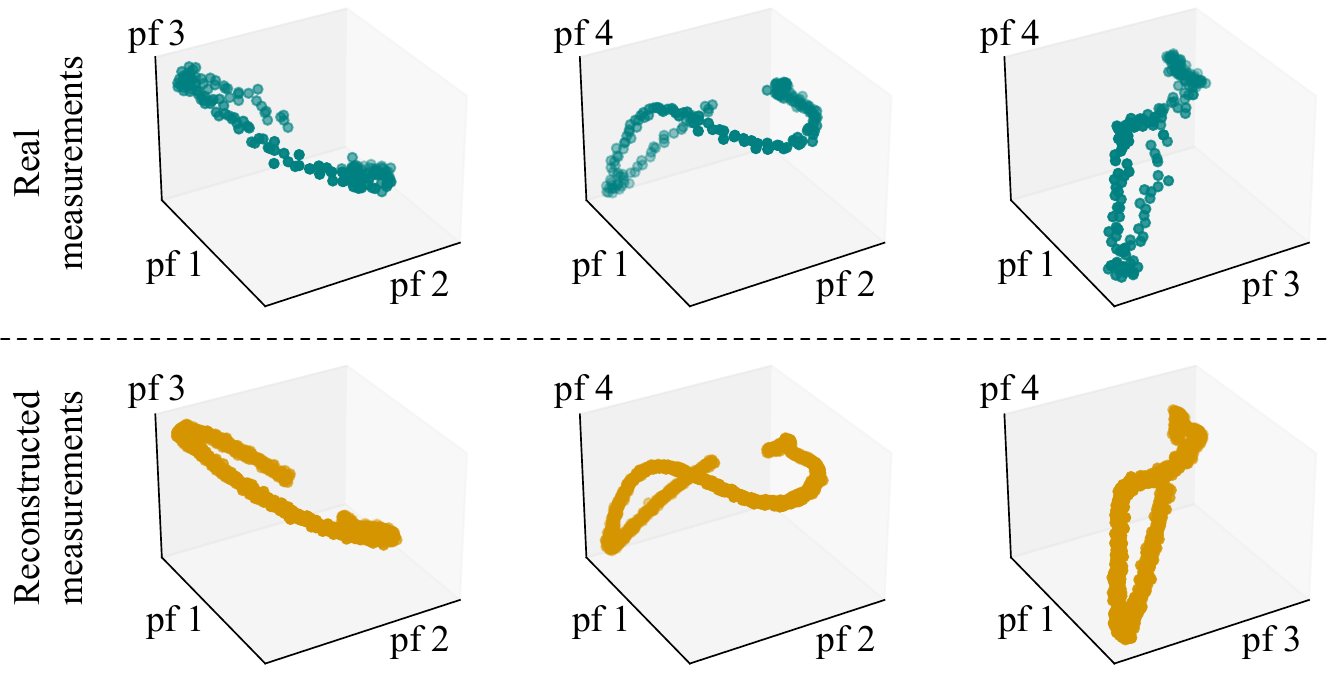}
\vskip -0.1in
\caption{Low-dimensional projections of original and reconstructed AC measurement manifolds in the IEEE 14-bus system.}
\vskip -0.1in
\label{fig:manifold-learn-correct}
\end{figure}

\vspace{-0.5em}
\subsection{Illustration of Measurement and State Perturbations}
\label{sec:exp-impact}

Building on the reconstructed measurement manifold, we next illustrate representative perturbed measurements $\z_a$ generated according to Eq.~(\ref{eq:attack-gen}). Fig.~\ref{fig:measurement-impact} compares these perturbed measurements with the original measurements in the IEEE 14-bus system using the same four representative active power flow measurements pf1–pf4. Although the perturbation introduces a noticeable magnitude shift, the temporal patterns remain closely aligned with those of the original data, indicating that the perturbations preserve the intrinsic structure of the measurements. This consistency reflects adherence to the measurement manifold and can lead to indistinguishably small residuals under residual-based detection.

\begin{figure}[h]
\centering
\vskip -0.1in
\includegraphics[width=1\linewidth]{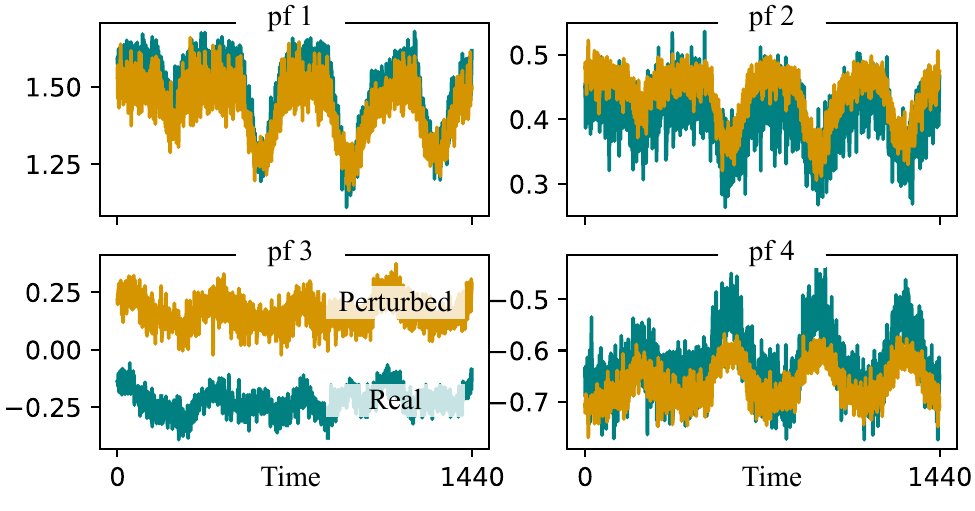}
\vskip -0.1in
\caption{Comparison between real and perturbed measurements in the IEEE 14-bus system. Perturbation vector $\mathbf{c}=(0.1,\cdots,0.1)$.}
\vskip -0.1in
\label{fig:measurement-impact}
\end{figure}

To examine the impact for state estimation, Fig.~\ref{fig:states-impact} compares the estimated system states obtained from the original measurements and the corresponding perturbed measurements $\z_a$. The first three state variables are shown as representative examples. The estimated states exhibit approximately parallel shifts that mirrors the magnitude of perturbation vector $\mathbf{c}$, indicating that the measurement-to-state mapping is largely preserved under our manifold-constrained perturbations. 

\begin{figure}[h]
\centering
\vskip -0.1in
\includegraphics[width=1\linewidth]{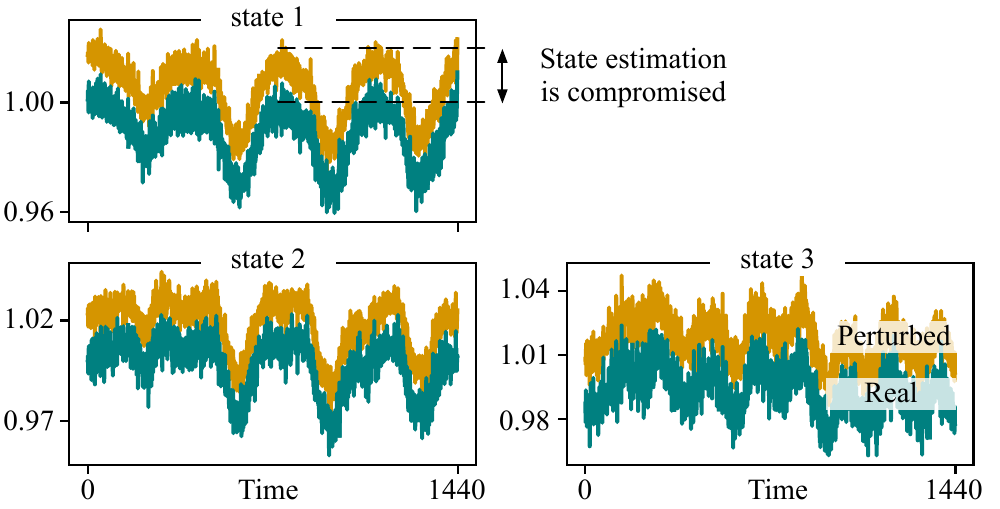}
\vskip -0.1in
\caption{Comparison of estimated system states under real and perturbed measurements in the IEEE 14-bus system. Perturbation vector $\mathbf{c}=(0.1,\cdots,0.1)$.}
\vskip -0.1in
\label{fig:states-impact}
\end{figure}

\vspace{-0.5em}
\subsection{Detectability Limits Under Bad Data Detection}
\label{sec:exp-stealthy}

To examine the detectability limit under residual-based detection, we evaluate bypass rates under two representative detectors: the classical BDD and a learning-based residual detector \cite{musleh2022lstmAE}. Fig.~\ref{fig:bypassBDD} plots the bypass rates $\text{succ}_{\text{BDD}}$ and $\text{succ}_{\text{learn}}$ as functions of the significance level \(\alpha \in [0.001, 0.1]\), comparing the proposed perturbation construction with two representative model-free baselines, AE-GAN \cite{costilla2022attack} and SA-GAN \cite{jiao2021new}. 
Across all tested systems and significance levels, perturbations generated by the proposed physics-guided autoencoder consistently achieve bypass rates exceeding the reference level $1-\alpha$ under both detectors. This behavior provides empirical support for Theorem~\ref{thm:physics-guided-auencoder} and indicates that the detectability limit is governed primarily by proximity to the AC measurement manifold rather than by the specific form of the residual-based detector.
Notably, the learning-based residual detector exhibits systematically higher bypass rates than BDD. This observation is consistent with the fact that, unlike BDD, learning-based detectors assess measurement consistency through data-driven reconstruction error and do not explicitly leverage physical measurement model. Meanwhile, the reference FDIA constructions exhibit lower BDD bypass rates, particularly at larger values of $\alpha$, indicating that residual deviations caused by imperfect manifold consistency under realistic noise and operating variability become increasingly detectable.
These observations collectively indicate that residual-based detectors, whether physics-driven (BDD) or learning-based, share a common geometric detectability limit determined by measurement-manifold consistency, which cannot be overcome by detector design alone.

\begin{figure}[h]
\centering
\vskip -0.1in
\includegraphics[width=1\linewidth]{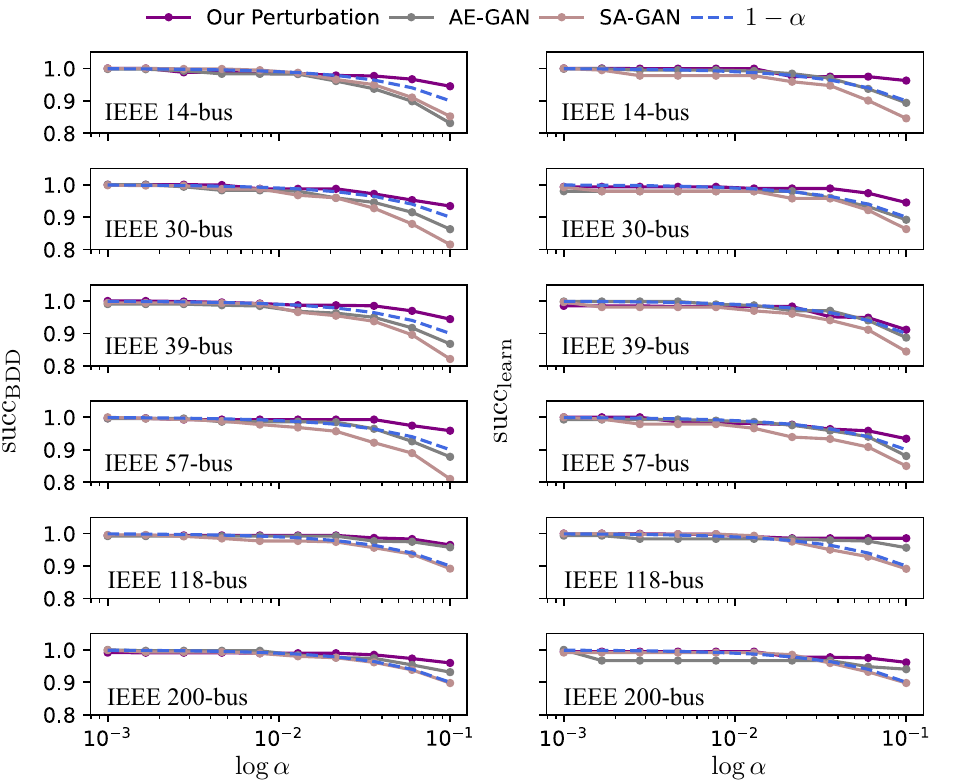}
\vskip -0.1in
\caption{Empirical bypass rates $\text{succ}_{\text{BDD}}$ and $\text{succ}_{\text{learn}}$ against various significance level \(\alpha \in [0.001, 0.1]\). Pertubation vector $\mathbf{c}=(0.1,\cdots,0.1)$.}
\vskip -0.1in
\label{fig:bypassBDD}
\end{figure}

\vspace{-0.5em}
\subsection{Ablation Study: Role of the Symbolic Basis}
A key component of the proposed perturbation construction is the symbolic basis \( f(\cdot) \), which lifts the nonlinear AC measurement model into a space where the measurement manifold admits a simpler structure. To assess its role in revealing the detectability limit, we conduct an ablation study by comparing the full physics-guided autoencoder with a variant that removes the symbolic basis \( f(\cdot) \). As shown in Fig.~\ref{fig:ablation-symbolic-basis}, omitting the symbolic basis results in larger BDD residual errors that deviate significantly from the theoretical chi-squared distribution. Consequently, the perturbative measurements are more readily detected by the BDD. 
This observation is consistent with Lemma~\ref{lemma:standard-auencoder} regarding the limitations of a standard autoencoder.
These results show that the symbolic basis is essential for systematically revealing the detectability limit by explicitly encoding the dominant nonlinear structure of the AC measurement model, rather than relying on incidental approximation accuracy from purely data-driven learning.

\begin{figure}[h]
    \centering
    \vskip -0.1in
    \includegraphics[width=1\linewidth]{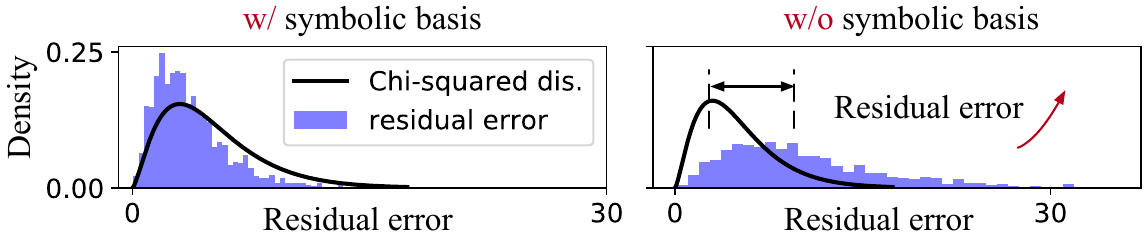}
    \vskip -0.1in
    \caption{Histogram of residual errors of perturbed measurements generated with and without the symbolism basis in the IEEE 57-bus system. Perturbation vector $\mathbf{c}=(0.1,\cdots,0.1)$.}
    \vskip -0.1in
    \label{fig:ablation-symbolic-basis}
\end{figure}

\vspace{-0.5em}
\subsection{Sensitivity Study on Latent Space Dimension}
\label{sec:exp-latent}
Another key design choice in the proposed perturbation construction is setting the autoencoder latent dimension equal to the number of system states. This choice aligns the learned representation with the state estimation process and facilitates accurate measurement reconstruction.
To evaluate the effect of latent space dimension, we conduct a sensitivity study on the IEEE 14-bus system, which has 26 free state variables, corresponding to voltage magnitudes and phase angles at 13 non-reference buses. We vary the latent dimension of autoencoder from 1 to 40, and for each setting, we report the reconstruction error of the trained autoencoder and the residual error of the generated perturbed measurements. As shown in the upper panel of Fig.~\ref{fig:errors_14bus}, the reconstruction error decreases rapidly as the latent dimension increases and becomes negligible once the latent dimension exceeds 26. In contrast, the lower panel of Fig.~\ref{fig:errors_14bus} shows that the residual error increases with the latent dimension and gradually approaches the level observed for true measurements at larger dimensions.
These results indicate that setting the latent space dimension equal to the number of system states provides a favorable trade-off, enabling accurate reconstruction of the measurement manifold while maintaining residual errors consistent with nominal operation, which is a necessary prerequisite for systematically revealing the detectability limit.

\begin{figure}[h]
\centering
\vskip -0.15in
\includegraphics[width=0.75\linewidth]{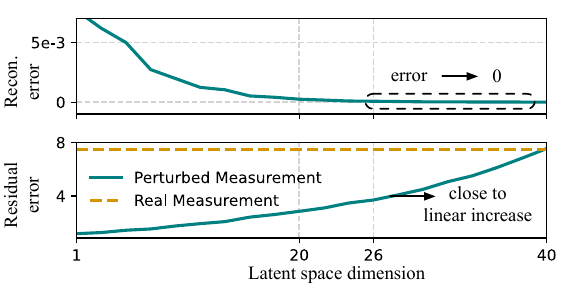}
\vskip -0.1in
\caption{Reconstruction error (upper) and residual error (lower) versus latent space dimension in the proposed physics-guided autoencoder.}
\vskip -0.1in
\label{fig:errors_14bus}
\end{figure}

\vspace{-0.5em}
\subsection{Sensitivity Study on Training Data Availability}
\label{sec:exp-volume}

The analysis in this paper assumes access to historical measurement data, consistent with prior studies on model-free FDIA analysis \cite{costilla2022attack,jiao2021new}. This assumption also reflects contemporary power system data-handling practices, where measurement streams are routinely collected, stored, and processed by external analytics platforms \cite{bidgely, verdigris}, introducing potential exposure during data transmission and storage \cite{banik2024survey}.
To examine how data availability affects the observability of the detectability limit, we conduct a sensitivity analysis with respect to training data volume. For each system, the number of training samples is gradually reduced from 1440 to 144. Fig.~\ref{fig:volume} reports the resulting BDD bypass rates for the proposed perturbation method and the baseline approaches. As the training data volume decreases, the bypass performance of all methods degrades, reflecting increasing difficulty in accurately capturing the measurement manifold. Notably, the proposed physics-guided construction exhibits a more gradual degradation compared with the baseline approaches. This behavior indicates that explicitly incorporating physical structure into manifold reconstruction improves robustness under data scarcity, enabling the detectability limit to remain observable even when historical data are limited.

\begin{figure}[h]
    \centering
    \vskip -0.1in
    \includegraphics[width=1\linewidth]{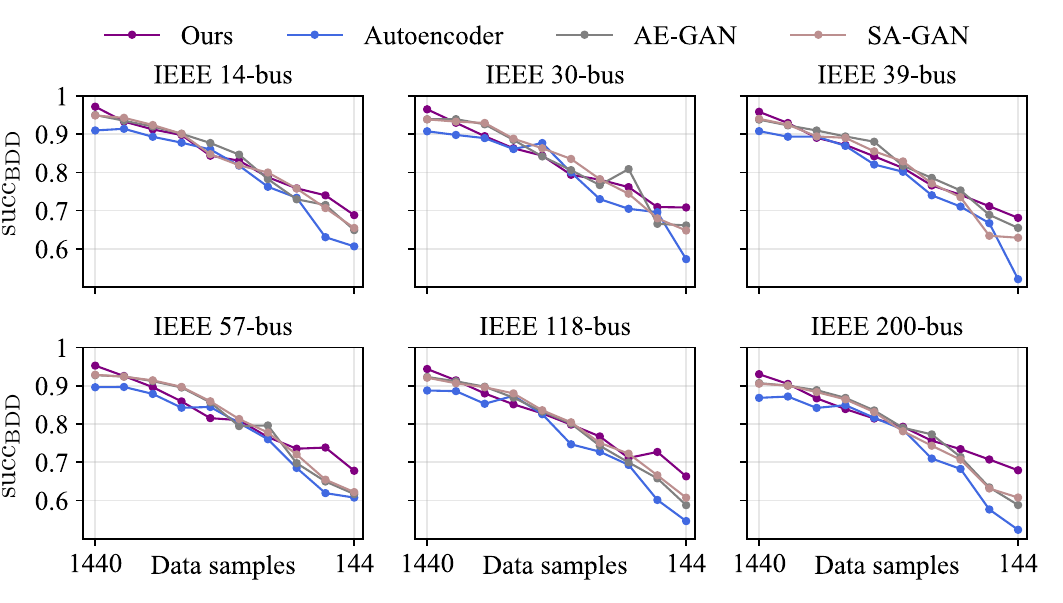}
    \vskip -0.1in
    \caption{BDD bypass rate as a function of training data volume.}
    \label{fig:volume}
    \vskip -0.1in
\end{figure}

\vspace{-0.5em}
\subsection{Sensitivity Study on Perturbation Magnitude}
\label{sec:exp-c}
We next examine how the magnitude of the latent-space perturbation affects the observability of the detectability limit. The perturbation vector \( \mathbf{c} \) characterizes the trade-off between perturbation magnitude and detectability. To study its effect, we scale the perturbation vector \( \mathbf{c} = (0.1,\cdots,0.1) \) by a scalar factor \( \gamma \in \{0.1, 0.5, 1, 2, 3\} \) and report the corresponding bypass rates of $\text{succ}_{\text{BDD}}$ and $\text{succ}_{\text{learn}}$ in Table~\ref{tab:gamma_sensitivity}.
As the perturbation magnitude increases, the BDD bypass success rate decreases from approximately 97\% to 92.4\%, indicating that larger perturbations induce greater deviations from the measurement manifold and thus increase detectability. We note that, while Theorem~\ref{thm:physics-guided-auencoder} establishes perturbation-independent detectability in the idealized setting of exact manifold reconstruction, practical factors such as finite training data and modeling imperfections introduce a dependence on perturbation magnitude. As a result, the detectability limit is more clearly observed within a moderate perturbation regime, where residual behavior remains governed primarily by proximity to the measurement manifold rather than by approximation error.

\begin{table}[h]
\centering
\vskip -0.1in
\caption{Sensitivity of bypass rates to perturbation magnitude.}
\label{tab:gamma_sensitivity}
\begin{tabular}{cccccc}
\toprule
Scaling factor $\gamma$ & 0.1 & 0.5 & 1 & 2 & 3 \\
\midrule
$\text{succ}_{\text{BDD}}$ (\%) & 97.0 & 96.3 & 95.1 & 93.6 & 92.4 \\
$\text{succ}_{\text{learn}}$ (\%) & 98.4 & 97.8 & 96.9 & 95.7 & 94.6 \\
\bottomrule
\end{tabular}
\vskip -0.1in
\end{table}

\vspace{-0.5em}
\subsection{Case Study: FDIA Under Limited Meter Manipulation}
In realistic settings, an adversary may only be able to manipulate a limited subset of measurements \(m_0 \le m\), due to physical access constraints, communication restrictions, or resource limitations. This scenario reflects practical, resource-constrained conditions in which full control over all meters is infeasible. From the perspective of detectability, this constraint raises the question of whether the measurement manifold can still be sufficiently represented when only partial measurement access is available.
Within the proposed framework, this setting is modeled by identifying a subset of \(m_0\) critical measurement dimensions that best capture the measurement manifold \(\mathcal{H}\). To this end, we extend the physics-guided autoencoder to a masked variant:
\begin{equation}\label{eq:AE_mask}
\min_{\thetaEnc, \thetaDec} 
\mathbb{E}_{\z}\left\| \z - \text{Dec}\left(f\left(\text{Enc}\left(\mask \odot \z\right)\right)\right) \right\|_{\mask}^2,
\end{equation}
where $\mask \odot \z$ denotes the dropout procedure applied to $\z$ \cite{quan2020self2self} using a binary Bernoulli mask vector $\mask$ with entries sampled independently from a Bernoulli distribution with probability \(m_0/m\). The operator $\odot$ denotes element-wise multiplication between two vectors. The norm $\left\| \cdot \right\|^2_{\mask}$ is calculated as $\left\| {\left(\boldsymbol{1} - \mask \right)} \odot \cdot \right\|^2_2$. After training convergence, we identify the subset \(\mathcal{V} \subset [m]\) corresponding to the top-\(m_0\) dimensions with the lowest reconstruction error. Using Eq. (\ref{eq:attack-gen}), the perturbed measurement is then constructed as
$
    \tilde{\z}_a = \left\{ \begin{aligned}\z_a(i) , \ i \in \mathcal{V}\\
    \z(i), \ i\not\in \mathcal{V}
    \end{aligned}\right. 
$.

We evaluate the proposed masked autoencoder on the IEEE 14-bus system using $m=14$ bus-level active power injections. The left panel of Fig. \ref{fig:meterCaseII} reports the BDD bypass rate as a function of \(m_0 \in [1, 20]\). As $m_0$ decreases, BDD bypass rate initially degrades gradually, indicating that restricting the attacker to fewer measurements increases detectability. When $m_0$ falls below a critical threshold, the bypass rate drops sharply, suggesting that a minimum subset of measurements is required to adequately represent the measurement manifold. Across all values of $m_0$, the proposed masked autoencoder consistently outperforms both a random selection strategy and a Phasor Measurement Unit (PMU) placement–based baseline \cite{baldwin2002power}, which identifies critical buses assuming full knowledge of grid parameters.
When $m_0=10$, the right panel of Fig. \ref{fig:meterCaseII} shows the identified critical buses, namely buses 1, 4, 6, 8, 10, and 14 (highlighted in red). These buses capture key structural features of the network, including major power flow junctions (buses 4 and 6), transformer-related effects (bus 8), mid-grid dynamics (bus 10), and boundary conditions (bus 14). This result illustrates that the masked autoencoder systematically identifies measurements that are critical for preserving the measurement manifold. From a defensive perspective, these locations also can correspond to potential monitoring bottlenecks where residual-based detection may be most vulnerable under partial observability.

\begin{figure}[h]
\centering
\vskip -0.1in
\includegraphics[width=1.0\linewidth]{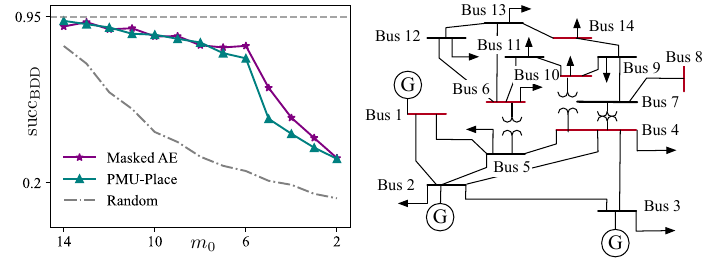}
\vskip -0.1in
\caption{Left: BDD bypass rate against $m_0\le m$ in IEEE 14-bus system. Right: important buses identified via the masked autoencoder.}
\vskip -0.1in
\label{fig:meterCaseII}
\end{figure}

\subsection{Case Study: Training with Partial Connectivity Information}

We next examine how partial grid connectivity information affects the efficiency with which the measurement manifold can be reconstructed, and consequently how the detectability limit manifests under reduced computational cost. The dominant computational burden of the proposed physics-guided autoencoder arises from the dimensionality of the symbolic basis, which scales as $p=n+2n^2 = \mathcal{O}(n^2)$ in the absence of structural information. As discussed in Section~\ref{sec:linearbasis}, when the adversary has access to network connectivity information, namely bus adjacency relations, the symbolic basis can be restricted to interactions among neighboring buses. Under this connectivity-guided construction, the symbolic basis dimension reduces to $p=3n+2\sum_i |\mathcal{N}_i|$ where $\mathcal{N}_{i}$ denotes the neighbor set of bus $i$. This dimension scales much smaller than $\mathcal{O}(n^2)$ for sparse power networks.
This scenario represents a middle ground between model-free and model-based approaches, as topology connectivity can sometimes be inferred through breaker status monitoring, substation configurations, and voltage correlation analysis. Additionally, attackers may estimate grid connectivity from historical measurement data using sparse regression, graph convolutional networks, and Bayesian networks, or retrieve topology data from public 
sources like BetterGrids or OpenGridMap.

% sources like BetterGrids \cite{BetterGrids2025} or OpenGridMap \cite{Rivera2015}.

Fig.~\ref{fig:sens1} compares the proposed perturbation construction with and without connectivity-guided symbolic basis across four metrics: (1) autoencoder training time, (2) per-sample perturbation generation time, (3) autoencoder reconstruction error, and (4) BDD bypass rate. 
The results show a substantial reduction in training time for large systems when connectivity information is incorporated, while the perturbation generation time remains nearly unchanged. This behavior is expected, as the dimension of the symbolic basis directly determines the size and training cost of the autoencoder, whereas during perturbation generation the symbolic basis is evaluated only once. Importantly, the autoencoder reconstruction error remains nearly identical between the two settings, indicating that the connectivity-guided formulation preserves the integrity of the learned measurement manifold.
Moreover, the BDD bypass rate exhibits a slight improvement when connectivity information is exploited, suggesting that constraining perturbations to a more structured and physically meaningful subspace can further sharpen the detectability limit. From a defensive perspective, these results highlight that partial exposure of connectivity information can significantly lower the barrier for systematically exploiting the detectability limits of residual-based detection.

\begin{figure}[h]
    \centering
    \vskip -0.1in
    \includegraphics[width=1\linewidth]{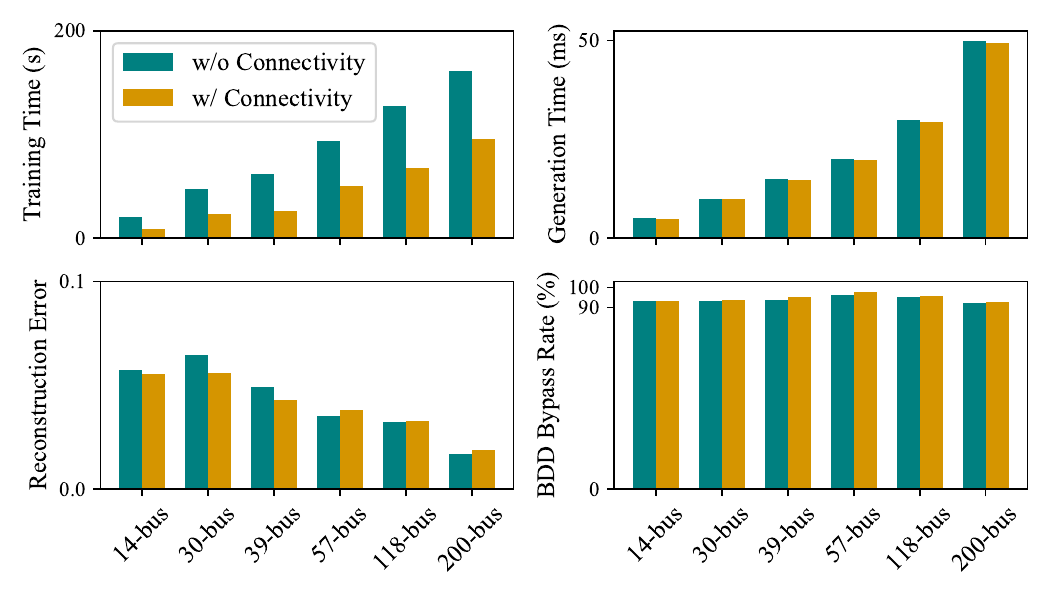}
    \vskip -0.15in
    \caption{Comparison of the proposed perturbation construction with and without connectivity-guided symbolic basis.}
    \vskip -0.1in
    \label{fig:sens1}
\end{figure}

% #############  Conclusion  #############
\vspace{-0.5em}
\section{Conclusion}\label{sec:conclusion}

This paper examined the detectability of physically consistent FDIA in AC power systems from a geometric perspective. By analyzing residual-based detection mechanisms through the lens of the AC measurement manifold, we showed that detectability is fundamentally governed by proximity to this manifold rather than by an attacker’s access to detailed network topology or line parameters. A manifold-based detectability condition was formalized to characterize this limitation. Using a physics-guided autoencoder as a constructive analytical tool, we demonstrated how manifold-aligned perturbations can systematically evade residual-based detection under realistic noise levels and data availability. Numerical studies on multiple IEEE benchmark systems confirmed the persistence of this detectability limit across different network sizes and operating conditions. Collectively, these findings expose inherent limitations of residual-based detection in AC state estimation and underscore the need for complementary detection mechanisms that extend beyond residual magnitude alone.

%%------------Bibliography--------------%%
\bibliographystyle{IEEEtran}
\bibliography{utils/bibliography.bib, utils/ref2}

@article{rudy2017data,
  title={Data-driven discovery of partial differential equations},
  author={Rudy, Samuel H and Brunton, Steven L and Proctor, Joshua L and Kutz, J Nathan},
  journal={Science advances},
  volume={3},
  number={4},
  pages={e1602614},
  year={2017},
  publisher={American Association for the Advancement of Science}
}

@article{musleh2022attack,
  title={Attack detection in automatic generation control systems using LSTM-based stacked autoencoders},
  author={Musleh, Ahmed S and Chen, Guo and Dong, Zhao Yang and Wang, Chen and Chen, Shiping},
  journal={IEEE Transactions on Industrial Informatics},
  volume={19},
  number={1},
  pages={153--165},
  year={2022},
  publisher={IEEE}
}

@article{paul2022modified,
  title={Modified grey wolf optimization approach for power system transmission line congestion management based on the influence of solar photovoltaic system},
  author={Paul, Kaushik},
  journal={International Journal of Energy and Environmental Engineering},
  volume={13},
  number={2},
  pages={751--767},
  year={2022},
  publisher={Springer}
}

@article{yang2022blind,
  title={Blind false data injection attacks against state estimation based on matrix reconstruction},
  author={Yang, Haosen and He, Xing and Wang, Ziqiang and Qiu, Robert C and Ai, Qian},
  journal={IEEE Transactions on Smart Grid},
  volume={13},
  number={4},
  pages={3174--3187},
  year={2022},
  publisher={IEEE}
}

@inproceedings{rahman2014impact,
  title={Impact analysis of topology poisoning attacks on economic operation of the smart power grid},
  author={Rahman, Mohammad Ashiqur and Al-Shaer, Ehab and Kavasseri, Rajesh},
  booktitle={IEEE 34th International Conference on Distributed Computing Systems},
  pages={649--659},
  year={2014}
}

@book{abur2004power,
title={Power system state estimation: theory and implementation},
author={Abur, Ali and Exposito, Antonio Gomez},
year={2004},
publisher={CRC press}
}

@article{liu2011false,
  title={False data injection attacks against state estimation in electric power grids},
  author={Liu, Yao and Ning, Peng and Reiter, Michael K},
  journal={ACM Transactions on Information and System Security},
  volume={14},
  number={1},
  pages={1--33},
  year={2011},
  publisher={ACM New York, NY, USA}
}

@article{jin2018power,
  title={Power grid AC-based state estimation: Vulnerability analysis against cyber attacks},
  author={Jin, Ming and Lavaei, Javad and Johansson, Karl Henrik},
  journal={IEEE Transactions on Automatic Control},
  volume={64},
  number={5},
  pages={1784--1799},
  year={2018},
  publisher={IEEE}
}

@article{li2018false,
  title={False data injection attacks with incomplete network topology information in smart grid},
  author={Li, Yuancheng and Wang, Yuanyuan},
  journal={IEEE Access},
  volume={7},
  pages={3656--3664},
  year={2018},
  publisher={IEEE}
}

@article{zhang2018can,
  title={Can attackers with limited information exploit historical data to mount successful false data injection attacks on power systems?},
  author={Zhang, Jiazi and Chu, Zhigang and Sankar, Lalitha and Kosut, Oliver},
  journal={IEEE Transactions on Power Systems},
  volume={33},
  number={5},
  pages={4775--4786},
  year={2018},
  publisher={IEEE}
}

@article{yu2015blind,
  title={Blind false data injection attack using PCA approximation method in smart grid},
  author={Yu, Zong-Han and Chin, Wen-Long},
  journal={IEEE Transactions on Smart Grid},
  volume={6},
  number={3},
  pages={1219--1226},
  year={2015},
  publisher={IEEE}
}

@article{esmalifalak2015stealthy,
  title={A stealthy attack against electricity market using independent component analysis},
  author={Esmalifalak, Mohammad and Nguyen, Huy and Zheng, Rong and Xie, Le and Song, Lingyang and Han, Zhu},
  journal={IEEE Systems Journal},
  volume={12},
  number={1},
  pages={297--307},
  year={2015},
  publisher={IEEE}
}

@article{costilla2022attack,
  title={Attack Power System State Estimation by Implicitly Learning the Underlying Models},
  author={Costilla-Enriquez, Napoleon and Weng, Yang},
  journal={IEEE Transactions on Smart Grid},
  volume={14},
  number={1},
  pages={649--662},
  year={2022},
  publisher={IEEE}
}

@ARTICLE{hug2012vulnerability,
author={G. {Hug} and J. A. {Giampapa}},
journal={IEEE Transactions on Smart Grid}, 
title={Vulnerability Assessment of AC State Estimation With Respect to False Data Injection Cyber-Attacks}, 
year={2012},
volume={3},
number={3},
pages={1362-1370},
doi={10.1109/TSG.2012.2195338}
}

@ARTICLE{wang2020detection,
author={Z. {Wang} and H. {He} and Z. {Wan} and Y. {Sun}},
journal={IEEE Transactions on Smart Grid}, 
title={Detection of False Data Injection Attacks in AC State Estimation Using Phasor Measurements}, 
year={2020},
volume={},
number={},
pages={1-1},
doi={10.1109/TSG.2020.2972781}}

@article{costilla2020combining,
  title={Combining newton-raphson and stochastic gradient descent for power flow analysis},
  author={Costilla-Enriquez, Napoleon and Weng, Yang and Zhang, Baosen},
  journal={IEEE Transactions on Power Systems},
  volume={36},
  number={1},
  pages={514--517},
  year={2020},
  publisher={IEEE}
}

@book{goodfellow2016deep,
  title={Deep Learning},
  author={Goodfellow, Ian and Bengio, Yoshua and Courville, Aaron},
  year={2016},
  publisher={The MIT Press}
}

@inproceedings{quan2020self2self,
  title={Self2self with dropout: Learning self-supervised denoising from single image},
  author={Quan, Yuhui and Chen, Mingqin and Pang, Tongyao and Ji, Hui},
  booktitle={Proceedings of the IEEE/CVF conference on computer vision and pattern recognition},
  pages={1890--1898},
  year={2020}
}

@misc{economist2010cyber,
author = {{The Economist}},
title = {{A cyber-missile aimed at Iran?}},
url = {https://www.economist.com/babbage/2010/09/24/a-cyber-missile-aimed-at-iran},
year={2010}
}

@article{baldi1989neural,
  title={Neural networks and principal component analysis: Learning from examples without local minima},
  author={Baldi, Pierre and Hornik, Kurt},
  journal={Neural networks},
  volume={2},
  number={1},
  pages={53--58},
  year={1989},
  publisher={Elsevier}
}

@ARTICLE{zimmerman2010matpower,
author={R. D. {Zimmerman} and C. E. {Murillo-Sánchez} and R. J. {Thomas}},
journal={IEEE Transactions on Power Systems}, 
title={MATPOWER: Steady-State Operations, Planning, and Analysis Tools for Power Systems Research and Education}, 
year={2011},
volume={26},
number={1},
pages={12-19},
doi={10.1109/TPWRS.2010.2051168}}

@article{zhang2018limited,
  title={Can attackers with limited information exploit historical data to mount successful false data injection attacks on power systems?},
  author={Zhang, J. and Chu, Z. and Sankar, L. and Kosut, O.},
  journal={IEEE Transactions on Power Systems},
  year={2018}
}

@incollection{wang2019cpad,
  title={Cyber-physical anomaly detection for power grid with machine learning},
  author={Wang, Pengyuan and Govindarasu, Manimaran},
  booktitle={Industrial Control Systems Security and Resiliency: Practice and Theory},
  pages={31--49},
  year={2019},
  publisher={Springer}
}

@article{musleh2022lstmAE,
  title={Attack detection in automatic generation control systems using LSTM-based stacked autoencoders},
  author={Musleh, Ahmed S and Chen, Guo and Dong, Zhao Yang and Wang, Chen and Chen, Shiping},
  journal={IEEE Transactions on Industrial Informatics},
  volume={19},
  number={1},
  pages={153--165},
  year={2022},
  publisher={IEEE}
}

@inproceedings{Zideh2023PIConvAE,
  title={Physics-informed convolutional autoencoder for cyber anomaly detection in power distribution grids},
  author={Zideh, Mehdi Jabbari and Solanki, Sarika Khushalani},
  booktitle={IEEE Power \& Energy Society General Meeting},
  pages={1--5},
  year={2024}
}

@article{liu2025survey,
  title={A survey on diffusion models for anomaly detection},
  author={Liu, Jing and Ma, Zhenchao and Wang, Zepu and Zou, Chenxuanyin and Ren, Jiayang and Wang, Zehua and Song, Liang and Hu, Bo and Liu, Yang and Leung, Victor},
  journal={arXiv preprint arXiv:2501.11430},
  year={2025}
}

@article{mantravadi2020securing,
  title={Securing IT/OT links for low power IIoT devices: design considerations for industry 4.0},
  author={Mantravadi, Soujanya and Schnyder, Reto and M{\o}ller, Charles and Brunoe, Thomas Ditlev},
  journal={IEEE Access},
  volume={8},
  pages={200305--200321},
  year={2020},
  publisher={IEEE}
}

@ARTICLE{8686241,
  author={Soltan, Saleh and Mittal, Prateek and Poor, H. Vincent},
  journal={IEEE Transactions on Power Systems}, 
  title={Line Failure Detection After a Cyber-Physical Attack on the Grid Using Bayesian Regression}, 
  year={2019},
  volume={34},
  number={5},
  pages={3758-3768},
  doi={10.1109/TPWRS.2019.2910396}}

@ARTICLE{8468098,
  author={Che, Liang and Liu, Xuan and Li, Zuyi and Wen, Yunfeng},
  journal={IEEE Transactions on Power Systems}, 
  title={False Data Injection Attacks Induced Sequential Outages in Power Systems}, 
  year={2019},
  volume={34},
  number={2},
  pages={1513-1523},
  doi={10.1109/TPWRS.2018.2871345}}

@article{jiao2021new,
  title={A new AC false data injection attack method without network information},
  author={Jiao, Runhai and Xun, Gangyi and Liu, Xuan and Yan, Guangwei},
  journal={IEEE Transactions on Smart Grid},
  volume={12},
  number={6},
  pages={5280--5289},
  year={2021},
  publisher={IEEE}
}

@article{yang2023false,
  title={A False Data Injection Attack Approach Without Knowledge of System Parameters Considering Measurement Noise},
  author={Yang, Haosen and Wang, Ziqiang},
  journal={IEEE Internet of Things Journal},
  year={2023},
  publisher={IEEE}
}

@article{stouffer2022guide,
  title={Guide to operational technology (ot) security},
  author={Stouffer, Keith and Pease, Michael and Tang, C and Zimmerman, Timothy and Pillitteri, Victoria and Lightman, Suzanne},
  journal={National Institute of Standards and Technology: Gaithersburg, MD, USA},
  year={2022}
}

@article{di2018triton,
  title={TRITON: The first ICS cyber attack on safety instrument systems},
  author={Di Pinto, Alessandro and Dragoni, Younes and Carcano, Andrea},
  journal={Proc. Black Hat USA},
  volume={2018},
  pages={1--26},
  year={2018}
}

@article{wang2019cyber,
  title={Cyber-physical anomaly detection for power grid with machine learning},
  author={Wang, Pengyuan and Govindarasu, Manimaran},
  journal={Industrial Control Systems Security and Resiliency: Practice and Theory},
  pages={31--49},
  year={2019},
  publisher={Springer}
}

@misc{bidgely,
  author = {{Bidgely Inc.}},
  title = {Energy disaggregation services},
  howpublished = {\url{https://www.bidgely.com}},
  note = {Accessed: 2025-07-02}
}

@misc{verdigris,
  author = {{Verdigris Technologies}},
  title = {AI-powered energy monitoring},
  howpublished = {\url{https://verdigris.co}},
  note = {Accessed: 2025-07-02}
}

@article{banik2024survey,
  title={Survey on Vulnerability Testing in the Smart Grid},
  author={Banik, Shampa and Rogers, Michael and Mahajan, Satish M and Emeghara, Chikezie M and Banik, Trapa and Craven, Robert},
  journal={IEEE Access},
  year={2024},
  publisher={IEEE}
}

@ARTICLE{xiao2025,
  author={Xiao, Chenhan and Costilla-Enriquez, Napoleon and Weng, Yang},
  journal={IEEE Open Access Journal of Power and Energy}, 
  title={Guaranteed False Data Injection Attack Without Physical Model}, 
  year={2025},
  volume={12},
  number={},
  pages={429-441},
}

@article{baldwin2002power,
  title={Power system observability with minimal phasor measurement placement},
  author={Baldwin, Thomas L and Mili, Lamine and Boisen, Monte B and Adapa, Ram},
  journal={IEEE Transactions on Power systems},
  volume={8},
  number={2},
  pages={707--715},
  year={2002},
  publisher={IEEE}
}

\end{document}